\newcommand\norm[1]{\left\lVert#1\right\rVert}
\newcommand\bbr{\mathbb{R}}
\renewcommand\v[1]{{\mathbf{#1}}}
\newcommand\mc[1]{\mathcal{#1}}
\DeclareMathOperator*{\supp}{supp}
\newcommand\vlambda{{\boldsymbol{\lambda}}}
\newcommand\vomega{{\boldsymbol{\omega}}}
\newcommand\vpsi{{\boldsymbol{\psi}}}
\newcommand\veta{{\boldsymbol{\eta}}}
\newtheorem{theorem}{Theorem}[section]
\newtheorem{lemma}[theorem]{Lemma}
\newtheorem{proposition}[theorem]{Proposition}
\theoremstyle{definition}
\newtheorem{defn}[theorem]{Definition}
\begin{document}

\title{Using Pseudocodewords to \\ Transmit Information}

\author{Nathan Axvig\thanks{N. Axvig is with the Department
  of Mathematics, Concordia College, Moorhead, MN, 56562, USA (email:  ndaxvig@cord.edu).}
	\thanks{This paper was presented in part at the 2011 Fall Central Section Meeting of the American Mathematical Society (Lincoln,
NE; October 2011) and at the 2012 Joint Mathematics Meetings (Boston, MA; January 2012).}
	\thanks{This work has been submitted to the IEEE for possible publication. Copyright may be transferred without notice, after which this version may no longer be accessible.}
}

\maketitle


\begin{abstract}
The linear programming decoder will occasionally output fractional-valued sequences that do not correspond to binary codewords -- such outputs are termed \emph{nontrivial pseudocodewords}.  Feldman et al. have demonstrated that it is precisely the presence of nontrivial pseudocodewords that prevents the linear programming decoder from attaining maximum-likelihood performance.  The purpose of this paper is to cast a positive light onto these nontrivial pseudocodewords by turning them into beneficial additions to our codebooks.  Specifically, we develop a new modulation scheme, termed \emph{insphere modulation}, that is capable of reliably transmitting both codewords and pseudocodewords.  The resulting non-binary, non-linear codebooks have higher spectral efficiencies than the binary linear codes from which they are derived and in some cases have the same or slightly better block error rates.  In deriving our new modulation scheme we present an algorithm capable of computing the \emph{insphere} of a polyhedral cone -- which loosely speaking is the largest sphere contained within the cone.  This result may be of independent mathematical interest.
\end{abstract}

\section{Introduction}\label{sec:intro}

When simulating a block code, one often repeats the following procedure:  generate a codeword, transmit it over a noisy channel, estimate a codeword that best explains the received vector, and finally compare the estimated codeword to the original.  When the estimated codeword and the transmitted codeword are identical, we record an overall success for the coding scheme.  In the case where the estimated codeword is not equal to the original codeword, we declare a block error.  This paper explores a third option:  what if the estimated codeword is not a codeword at all?  What, if anything, can be gained from this?

The third option above is a very real possibility when using certain ``modern'' decoding algorithms such as min-sum and linear programming decoding.  It has been observed (see, e.g.,~\cite{aveMinSum, fwk05}) that occasionally these decoders will output sequences that are not codewords at all -- such non-codeword outputs are termed \emph{nontrivial pseudocodewords}.  For an example, consider transmission of a codeword from the binary linear code given by the null space of the following parity-check matrix over the additive white Gaussian noise (AWGN) channel:
\[
H = \footnotesize{\begin{bmatrix}
1 & 1 & 0 & 1 & 0 & 0 & 0 & 0 & 0 \\
0 & 1 & 1 & 0 & 1 & 0 & 0 & 0 & 0 \\
1 & 0 & 1 & 0 & 0 & 1 & 0 & 0  & 0\\
0 & 0 & 0 & 1 & 0 & 0 & 1 & 0 & 1 \\
0 & 0 & 0 & 0 & 1 & 0 & 1 & 1 & 0\\
0 & 0 & 0 & 0 & 0 & 1 & 0 & 1 & 1 \\
\end{bmatrix}}.
\]
If the received vector yields the vector of log-likelihood ratios $\begin{bmatrix} 0, 0, 0, -1, 1, 1, 0, 0, 0 \end{bmatrix}^T$, then the linear programming (LP) decoder will return $\begin{bmatrix} \frac{1}{2}, \frac{1}{2}, \frac{1}{2}, 1, 0, 0, \frac{1}{2}, \frac{1}{2}, \frac{1}{2}\end{bmatrix}^T$ as an output. This output vector is clearly not a binary codeword since it has fractional values.  In such a case Feldman et al. \cite{fwk05} suggest that we declare a block error and move on with our lives.  

We henceforth adopt the mindset that any potential output of a decoder can be used to transmit information across a channel.  To this end we construct a new modulation scheme, termed \emph{insphere modulation}, that is capable of reliably transmitting pseudocodewords in conjunction with codewords.  Employing insphere modulation therefore allows for larger codebooks of the same length as the original binary code, and as a result we see modest gains in both spectral efficiency and information rate.  Moreover, we demonstrate that in some circumstances these gains in throughput can be made without sacrificing error-correcting performance.  In order to compute inspheres for transmitting pseudocodewords, we develop an algorithm capable of computing the insphere of a polyhedral cone to any degree of accuracy.  While we presently treat this result as a means to an end, it may be of independent mathematical interest.

The remainder of this paper is organized as follows.  In Section~\ref{sec:background} we review linear programming decoding and some key results from linear optimization.  Section~\ref{sec:bestVector} is dedicated to determining which  vectors are best suited to transmitting linear programming pseudocodewords across the AWGN channel, and Section~\ref{sec:insphere} derives an algorithm for computing the best of these vectors, which in our case turns out to be the center of the insphere of an appropriately defined polyhedral cone.  Section~\ref{sec:csym} shows how the results of Section~\ref{sec:bestVector} can be expanded to greatly increase the size of our codebook.  Section~\ref{sec:example} gives a detailed example, Section~\ref{sec:performance} contains simulation results, and we conclude in Section~{\ref{sec:conclusion}.

\section{Background}\label{sec:background}

To produce vectors suitable for reliably transmitting pseudocodewords, we must go ``under the hood'' of the linear programming decoder and its underlying algebra and geometry.  The material in this section is largely adapted from~\cite{bt}, though it can also be found in most introductory texts on linear optimization.  Notationally, we write $\v x^T$ for the transpose of $\v x$ and $\norm{\v x}$ for the usual Euclidean norm of $\v x$.  Unless  stated otherwise, all vectors are assumed to be column vectors.

\subsection{Background on Linear Programming}\label{subsec:backgroundOnLinearProgramming}

A \emph{polyhedron} in $\bbr^n$ is the set of all points satisfying a finite set of linear equality and inequality constraints.  Since an equality constraint of the form $\v a^T \v x = b$ can be replaced by the two inequalities $\v a^T \v x \leq b$ and $\v a^T \v x \geq b $, we can equivalently define a polyhedron as the set of all points satisfying a finite number of linear inequality constraints.  Moreover, by multiplying constraints by $-1$ we may reformulate any description of a polyhedron in various compact forms involving matrix products such as $A_1 \v x \leq \v b_1 $ or $A_2 \v x \geq \v b_2$.  

If a polyhedron is bounded we refer to it as a \emph{polytope}.  On the other hand, if a polyhedron is closed under multiplication by non-negative scalars we call it a \emph{polyhedral cone}.  It is an easy exercise to see that polyhedra are convex sets, i.e., given two points $\v x$ and $\v y$ in a polyhedron $\mc M$ and a scalar $\lambda \in [0,1]$, the convex combination $\lambda \v x + (1-\lambda) \v y$ is also in $\mc M$.  If a point $\vomega$ in a polyhedron $\mc M$ cannot be written as a convex combination of two different points in $\mc M$, we say that $\vomega$ is an \emph{extreme point} or a \emph{vertex} of the polyhedron. Given a point $\v y \in \bbr^n$  and a constraint of the form $\v a^T \v x \leq b, \v a^T  \v x  \geq b, \text{ or } \v a^T  \v x  = b $, we say that the constraint is \emph{active} or \emph{binding} at $\v y$ provided that the constraint is met with equality, i.e., if $\v a^T \v y = b$.

Given a finite set of vectors $\mc V = \{ \v v_1, \v v_2, \dots, \v v_\ell\} \subseteq \bbr^n$ and a set of non-negative scalars $\{w_1, w_2, \dots, w_\ell\}$, we say that $ \sum_{i = 1}^\ell w_i \v v_i$ is a \emph{conic combination} of the vectors in $\mc V$. The polyhedral cone generated by $\mc V$ is then the set of all conic combinations of vectors in $\mc V$.  It is initially unclear that $\mc K(\mc V)$ is a polyhedron since it is not explicitly described as the set of all vectors satisfying a finite set of equality and inequality constraints.  A classical approach to affirming this fact is \emph{Fourier-Motzkin elimination} (see, e.g.,~\cite{bt}), a technique that produces a complete set of linear equality and inequality constraints for a projection of a polyhedron onto a subset of its coordinates.

For our purposes we define an optimization problem to be the task of maximizing or minimizing an \emph{objective function} over all points in a \emph{feasible set}.  A \emph{linear program (LP)} is simply an optimization problem for which the objective function is linear and whose feasible set is a polyhedron.  It is well known that if an optimal solution to a linear program exists, then there is at least one extreme point of the underlying polyhedron that achieves this optimal value.  Conversely, given an extreme point $\vomega$ of a polyhedron $\mc M$ there must exist some linear function $\v c$ so that $\vomega$ is the unique minimizer of $\v c^T \v x$ over all $\v x \in \mc M$ (see, e.g., pages 46-52 of~\cite{bt}).    

\subsection{Background on Linear Programming Decoding}\label{subsec:lpDecodingBackground}

In~\cite{fwk05} the \emph{linear programming decoder} is introduced.  This decoder operates by solving a linear programming relaxation of the maximum-likelihood (ML) decoding problem.  In particular, the decoder outputs a solution to the following linear program:
\begin{equation*}
\begin{array}{rll}
\text{minimize} & \vlambda^T \v x  \\
\text{subject to} & \v x \in \mc P(H)
\end{array}
\end{equation*}
where $\vlambda$ is the vector of log-likelihood ratios determined from the channel output and $\mc P(H) \subseteq \bbr^n$ is a polytope whose constraints are determined by the parity-check matrix $H$ chosen to represent the code.

\begin{defn}[\cite{fwk05}]\label{defn:fundamentalpolytope}
Let $C$ be a code presented by an $m \times n$ parity-check matrix $H = (h_{j,i})$.  For any row $\v h_j$ of $H$, define $N(\v h_j)$ to be the set of all indices $ i \in \{1, 2, \dots, n\}$ with $h_{j, i} = 1$.  The \emph{fundamental polytope\footnote{In~\cite{fwk05}, this is referred to as the \emph{projected} polytope.}} $\mc P = \mc P(H)$ is the set of all vectors $\v x \in \bbr^n$ satisfying the following constraints:
\begin{itemize}
\item $0 \leq x_i \leq 1$ for all $i = 1, 2, \dots, n$ and
\vspace{.1cm}
\item $\displaystyle \sum_{i \in S} x_i + \sum_{i' \in N(\v h_j)\setminus S}(1- x_{i'}) \leq |N(\v h_j)| - 1$ for all pairs $(j, S)$, where  $j \in \{1, 2, \dots, m\}$ and $S$ is a subset of $N(\v h_j)$ with odd cardinality.
\end{itemize}
\end{defn}

Since the fundamental polytope $\mc P(H)$ is bounded the linear programming decoder will always have an optimal solution, and by the preceding discussion we can assume that the linear programming decoder always returns an extreme point of $\mc P(H)$ as its output. As in~\cite{AMPTPW3}, we define a \emph{linear programming (LP) pseudocodeword} to be any extreme point of $\mc P(H)$. In~\cite{fwk05}, it is shown that all codewords are extreme points of $\mc P(H)$ and conversely that all integer-valued vectors within $\mc P(H)$ must be codewords.  Again following~\cite{AMPTPW3}, we therefore refer to codewords as \emph{trivial linear programming (LP) pseudocodewords} and any extreme points of $\mc P(H)$ with fractional values as  \emph{nontrivial linear programming (LP) pseudocodewords}.  Since this paper is  concerned only with the LP decoder, we will often refer to these simply as trivial and nontrivial pseudocodewords.

In~\cite{fwk05}, it is shown that if the linear programming decoder outputs an integer-valued codeword then this codeword must be a maximum-likelihood codeword.  Since the set of codewords and the set of of integer-valued points in $\mc P(H)$ are identical, this implies that the presence of nontrivial LP pseudocodewords is precisely what prevents the LP decoder from achieving ML performance.  Much research has therefore been devoted to examining the structure and properties of nontrivial pseudocodewords (see, e.g.,~\cite{AMPTPW3,KelSri3,kelley-sridhara,KLVW2,SmarVon07,VonKoe06,XiaFu}).  The authors of~\cite{dublin} go so far as to determine whether for a given code $\mc C$ there exists a parity-check matrix $H$ for which $P(H)$ contains no nontrivial pseudocodewords -- for most binary linear codes, the answer is ``no''~\cite{dublin}.  Nontrivial pseudocodewords are ubiquitous, and yet for certain received vectors they \emph{are} optimal solutions to the LP decoding problem.  Taking the saying ``If you can't beat them, join them'' to heart, we seek a method of incorporating nontrivial pseudocodewords into our codebooks so as to increase information rates and spectral efficiency without sacrificing error-correcting performance.

\section{How To Transmit a Pseudocodeword}\label{sec:bestVector}

When transmitting binary values over the additive white Gaussian noise channel it is common to employ binary phase shift keying (BPSK), which without loss of generality sends a binary 0 to $+1$ and a binary $1$ to $-1$. As a first (and very naive) attempt at transmitting nontrivial pseudocodewords, we simply extended this modulation mapping via the affine map $x \mapsto 1-2x$, added Gaussian noise, and applied LP decoding to the resulting vector.  The performance of this modulation scheme was terrible -- the block error rate was consistently close to 1 at all decibel levels. On investigation, we discovered that even when \emph{no noise} was added to our modulated vector the LP decoder was unable to recover the original pseudocodeword.  We concluded that to have any hope of transmitting a pseudocodeword a fundamentally new modulation scheme would be required.

The key observation in deriving our new modulation scheme for the transmission of linear programming pseudocodewords over the additive white Gaussian noise channel is that every pseudocodeword, trivial or nontrivial, is an extreme point of the fundamental polytope $\mc P(H)$.  By the discussion at the end of Section~\ref{subsec:backgroundOnLinearProgramming}, given a pseudocodeword $\vomega \in \mc P$ there must exist a vector of log-likelihood ratios $\vlambda$ so that $\vomega$ is the \emph{unique} solution to the following optimization problem:  minimize $\vlambda^T \v x$ over all $\v x \in \mc P$.  We therefore aim to produce a modulated vector that is equal to such a vector of log-likelihood ratios, since this will ensure that when $\vomega$ is transmitted in a low-noise scenario the LP decoder will recover the pseudocodeword $\vomega$.  Since on the AWGN channel the vector of log-likelihood ratios is proportional to the received vector, we can simply transmit the vector $\vlambda$ to accomplish this.  For this reason we will conflate the terms ``received vector'' and ``vector of log-likelihood ratios.''

As we prove in Proposition~\ref{prop:coniccombo} there are usually infinitely many choices for $\vlambda$.  It is our task to find a choice that minimizes the probability of block error.

\begin{proposition}\label{prop:coniccombo}
Let $\mc M = \{\v x \, | \, A\v x \geq \v b \}$ be a polyhedron in $\bbr^n$, let $\v x^\ast \in \mc M$ be given, and define $S$ to be the set of all indices $i$ where $\v a_i^T \v x^\ast = b_i$ - i.e., the index set for all constraints that are active at $\v x^\ast$. Consider the problem of minimizing $\v c^T\v x$ over all $\v x \in \mc M$: the vector $\v x^\ast$ is an optimal solution to this problem if and only if $\v c^T$ is a conic combination of the rows of $A$ indexed by $S$.
\end{proposition}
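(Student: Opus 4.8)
The plan is to prove the two implications separately, dispatching the ``if'' direction by a direct feasibility estimate and reserving the real work for the ``only if'' direction, which I would handle by a separating-hyperplane (Farkas-type) argument.

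For sufficiency, suppose $\v c^T = \sum_{i \in S} w_i \v a_i^T$ with every $w_i \geq 0$. Then for an arbitrary feasible point $\v x \in \mc M$ I would write $\v c^T \v x = \sum_{i \in S} w_i (\v a_i^T \v x) \geq \sum_{i \in S} w_i b_i$, invoking $\v a_i^T \v x \geq b_i$ together with $w_i \geq 0$. Because every constraint indexed by $S$ is active at $\v x^\ast$, the right-hand side is exactly $\sum_{i \in S} w_i (\v a_i^T \v x^\ast) = \v c^T \v x^\ast$. Hence $\v c^T \v x \geq \v c^T \v x^\ast$ for all $\v x \in \mc M$, so $\v x^\ast$ is optimal. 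This direction is routine; the substance lies in the converse.

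For necessity I would argue by contraposition: assume $\v c^T$ is \emph{not} a conic combination of the active rows, i.e.\ the column vector $\v c$ lies outside the cone $\mc K = \mc K(\{\v a_i \mid i \in S\})$, and exhibit a feasible point of strictly smaller objective value. Since $\mc K$ is generated by finitely many vectors, it is a polyhedral cone and therefore a closed convex set (this is exactly the content of the Fourier-Motzkin discussion in Section~\ref{subsec:backgroundOnLinearProgramming}). As $\v c \notin \mc K$ with $\mc K$ closed and convex, the separating-hyperplane theorem supplies a vector $\v d$ with $\v d^T \v y \geq 0$ for every $\v y \in \mc K$ yet $\v c^T \v d < 0$; here the homogeneity of the cone (if $\v y \in \mc K$ then $t\v y \in \mc K$ for all $t \geq 0$) is what forces the separating value down to $0$ rather than a general constant. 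In particular $\v a_i^T \v d \geq 0$ for each $i \in S$ while $\v c^T \v d < 0$. I would then follow the ray $\v x^\ast + t\v d$: for $i \in S$ we have $\v a_i^T(\v x^\ast + t\v d) = b_i + t\,\v a_i^T \v d \geq b_i$, so the active constraints survive for every $t \geq 0$, and for each inactive $i \notin S$ the slack $\v a_i^T \v x^\ast - b_i$ is strictly positive, so by continuity (and there being only finitely many such constraints) some $t_0 > 0$ keeps all of them satisfied on $[0, t_0]$. Thus $\v x^\ast + t\v d \in \mc M$ for small $t > 0$, while $\v c^T(\v x^\ast + t\v d) = \v c^T \v x^\ast + t\,\v c^T \v d < \v c^T \v x^\ast$, contradicting the optimality of $\v x^\ast$ and completing the contrapositive.

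The main obstacle is the separation step in the converse, and specifically the justification that the finitely generated cone $\mc K$ is closed: this closedness is precisely what upgrades ``$\v c \notin \mc K$'' to \emph{strict} separability, and a careless application of separation to a non-closed convex set can fail. I would secure it from the existence of a finite inequality description of $\mc K$ produced by Fourier-Motzkin elimination; alternatively, one may simply invoke Farkas' lemma, of which this proposition is essentially a restatement, and thereby bypass the explicit construction of $\v d$ altogether. The only remaining care is the bookkeeping that ensures the step size $t$ simultaneously respects all inactive constraints, which is immediate because $S$ indexes finitely many rows.
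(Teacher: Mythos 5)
Your proposal is correct and its skeleton matches the paper's proof: the sufficiency direction is the same lower-bound computation (using activity of the constraints in $S$ at $\v x^\ast$ to show that the bound $\sum_{i \in S} w_i b_i$ is attained by $\v x^\ast$), and the necessity direction produces a direction $\v d$ with $\v a_i^T \v d \geq 0$ for all $i \in S$ and $\v c^T \v d < 0$, then runs the identical small-step ray argument along $\v x^\ast + t\v d$ to contradict optimality. The one point of divergence is how $\v d$ is obtained: the paper simply invokes Farkas' Lemma, whereas you derive $\v d$ by strictly separating $\v c$ from the finitely generated cone of active rows, justifying the needed closedness of that cone via its polyhedrality (Fourier--Motzkin). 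In effect you have inlined the standard proof of exactly the instance of Farkas' Lemma that the paper cites---as you yourself observe---which makes your argument more self-contained at the cost of carrying the closedness point, which you correctly flag as the step where a careless application of separation would fail.
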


\begin{proof}
Define $A_S$ to be the matrix obtained by only considering those rows of $A$ indexed by $S$ and suppose that $\v c^T$ is not a conic combination of the rows in $A_S$.  This means that there is no vector $\v w$ satisfying both $\v c^T = \v w^T A_S$ and $\v w \geq \v 0$.  By Farkas' Lemma (see, e.g.,~\cite{bt}), there must exist a vector $\v d$ with $A_S \v d \geq 0$ and $\v c^T \v d < 0$.  Consider now the vector $\v x^\ast + \theta \v d$.  Since $\v a_i^T \v x^\ast > b_i$ for all $i \not \in S$, if we choose $\theta > 0$ to be sufficiently small then $\v x^\ast + \theta \v d$ will satisfy $A(\v x^\ast + \theta \v d) \geq \v b$ and thus be an element of $\mc M$.  Computing the objective value of this new vector in $\mc M$, we obtain
\[
\v c^T (\v x^\ast + \theta \v d) = \v c^T \v x^\ast + \theta \v c^T \v d < \v c^T \v x^\ast,
\] 
which implies that $\v x^\ast$ cannot be an optimal solution to the minimization problem.

For the other direction, we use a modification of the proof that  any basic feasible solution of a linear program is also a vertex (see, e.g., the discussion on page 52 of~\cite{bt}).  We now reproduce this proof with appropriate modifications for the sake of completeness.  Suppose that $\v c^T = \sum_{i \in S} w_i \v a^T_i$ for some scalars $w_i$ with $w_i \geq 0$ for all $i \in S$.  If $\v x$ is an arbitrary point in $\mc M$, we have $\v c^T \v x = \sum_{i \in S} w_i \v a_i^T \v x \geq \sum_{i \in S} w_i b_i$.  The number $\sum_{i \in S} w_i b_i$ is therefore a lower bound on the objective value for any point in $\mc M$.  But by the definition of $S$, we have that $\v a_i^T \v x = b_i$ for all $i \in S$.  Thus, $\v c^T \v x^\ast = \sum_{i \in S} w_i \v a_i^T \v x^\ast = \sum_{i \in S} w_i b_i$. The vector $\v x^\ast$ attains this bound and therefore must be an optimal solution to the problem of minimizing $\v c^T \v x$ subject to $A \v x \geq \v b$.

\end{proof}

To apply Proposition~\ref{prop:coniccombo} to our coding problem, we make the following definition.

\begin{defn}
Let $\mc C$ be a code presented by the parity-check matrix $H$, let $\mc P(H)$ be the associated fundamental polytope, and assume that $\mc P(H)$ has been written as $\mc P(H) = \{\v x \, | \, A\v x \geq \v b \}$ for an appropriate matrix $A$.  For any linear programming pseudocodeword $\vomega$, the \emph{recovery cone $\mc K_\vomega(H) = \mc K_\vomega$} of $\vomega$ is the cone generated by all row vectors of $A$ for which the corresponding constraint of $\mc P(H)$ is binding at $\vomega$.
\end{defn}

Let $t(\vomega)$ be a point in the recovery cone $\mc K_\vomega$ and let $\veta$ represent the additive Gaussian noise introduced by the channel.   By Proposition~\ref{prop:coniccombo}, the nontrivial pseudocodeword $\vomega$ can be correctly recovered if and only if the received vector $t(\vomega) + \veta$ lies within $\mc K_\vomega$.  If $r$ is the radius of the largest sphere centered at $t(\vomega)$ yet contained completely within $K_\vomega$, then the probability of correct decoding is bounded below by the probability that $|\veta| \leq r$.  

A trivial way of driving the probability of correct decoding to 1 is simply to scale $t(\vomega)$ by a large positive constant, which is tantamount to transmitting with more energy.  It is more interesting to place an energy constraint on the transmitted vector and then search for a vector that maximizes the radius of the largest sphere centered at that vector and contained within $\mc K_\vomega$. Stated more formally, we seek to compute the \emph{insphere} of  $\mc K_\vomega$.  This matter is treated separately in Section~\ref{sec:insphere}.  

Our approach to computing the insphere of a polyhedral cone depends on having access to an explicit description of the cone in terms of linearly inequality constraints.  While Fourier-Motzkin elimination (see, e.g.,~\cite{bt}) can produce such a description, it is terribly inefficient:  for codes of small block length ($n \leq 20$) Fourier-Motzkin elimination produces the constraint matrix for $\mc K_\vomega$ in a reasonable amount of time (a few minutes to a few hours on a desktop computer).  For longer codes, however, we approximate $\mc K_\vomega$ by examining a larger cone that contains it.

\begin{defn}\label{defn:Lvomega}
Let $\mc C$ be a code presented by the parity-check matrix $H$, let $\vomega$ be a linear programming pseudocodeword, and let $\Psi = \{\vpsi_1, \vpsi_2, \dots, \vpsi_r\}$ be a set of linear programming pseudocodewords.  The \emph{approximation cone $\mc R_{\vomega, \Psi}(H) = \mc R_{\vomega, \Psi}$} is defined to be the set of all vectors $\v x$ satisfying $R_{\vomega, \Psi} \v x \geq 0$, where the $i$th row of $R_{\vomega, \Psi}$ is given by $(\vpsi_i - \vomega)^T$.
\end{defn}

\begin{proposition}\label{prop:LcontainsK}
Let $\mc C$ be a code presented by the parity-check matrix $H$.  If $\vomega$ is a linear programming pseudocodeword and $\Psi = \{\vpsi_1, \vpsi_2, \dots, \vpsi_r\}$ is a set of linear programming pseudocodewords, then the recovery cone $\mc K_\vomega$ is a subset of the approximation cone $\mc R_{\vomega,\Psi}$.
\end{proposition}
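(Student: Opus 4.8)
The plan is to prove the set containment directly: I would take an arbitrary element $\v c$ of the recovery cone $\mc K_\vomega$ and verify that it satisfies the defining inequalities of the approximation cone $\mc R_{\vomega,\Psi}$, namely that $(\vpsi_i - \vomega)^T \v c \geq 0$ for each $i = 1, \dots, r$.

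First I would unpack the definition of $\mc K_\vomega$. Writing $\mc P(H) = \{\v x \mid A \v x \geq \v b\}$ and letting $S$ denote the index set of the constraints that are active at $\vomega$, every $\v c \in \mc K_\vomega$ is by definition a conic combination $\v c = \sum_{j \in S} w_j \v a_j$ of the active rows, with each $w_j \geq 0$. This is precisely the hypothesis needed to invoke the ``if'' direction of Proposition~\ref{prop:coniccombo} with $\mc M = \mc P(H)$ and $\v x^\ast = \vomega$: since $\v c^T$ is a conic combination of the rows of $A$ indexed by $S$, the pseudocodeword $\vomega$ is an optimal solution to the problem of minimizing $\v c^T \v x$ over all $\v x \in \mc P(H)$.

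Next I would use the fact that each $\vpsi_i$ is a linear programming pseudocodeword, hence an extreme point of $\mc P(H)$ and in particular a feasible point of $\mc P(H)$. The optimality of $\vomega$ then yields $\v c^T \vpsi_i \geq \v c^T \vomega$ for every $i$, which rearranges to $(\vpsi_i - \vomega)^T \v c \geq 0$ (the quantity being a scalar, the transpose is immaterial). As this holds for all $i$, the vector $\v c$ satisfies $R_{\vomega,\Psi} \v c \geq 0$ and therefore lies in $\mc R_{\vomega,\Psi}$. Since $\v c$ was arbitrary, we conclude $\mc K_\vomega \subseteq \mc R_{\vomega,\Psi}$.

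There is no serious obstacle here; the argument is essentially a repackaging of Proposition~\ref{prop:coniccombo}. The one point requiring care is the direction in which that proposition is applied — I would want the implication ``conic combination $\Rightarrow$ optimality,'' which is the easier half proved via the lower-bound argument — together with the observation that the generators of the approximation cone are exactly the displacement vectors $\vpsi_i - \vomega$ pointing from $\vomega$ toward the other feasible pseudocodewords. In this light, each defining inequality of $\mc R_{\vomega,\Psi}$ is nothing more than the statement that moving from $\vomega$ in the direction of $\vpsi_i$ cannot decrease the objective $\v c^T \v x$, which is immediate once $\vomega$ is known to minimize $\v c^T \v x$ over the whole polytope.
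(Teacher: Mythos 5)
Your proof is correct and follows essentially the same route as the paper's: both apply the ``conic combination $\Rightarrow$ optimality'' direction of Proposition~\ref{prop:coniccombo} to conclude that each $\v x \in \mc K_\vomega$ makes $\vomega$ a minimizer over $\mc P(H)$, then use feasibility of each $\vpsi_i$ to get $(\vpsi_i - \vomega)^T \v x \geq 0$ and intersect over $i$. The paper's version is merely terser; yours spells out the same argument in full detail.
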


\begin{proof}
The inequality $(\vpsi_i - \vomega)^T\v y \geq 0$ describes all cost functions $\v y$ of the linear programming decoder for which $\vomega$ has an objective value that is no larger than that of $\vpsi_i$.  By Proposition~\ref{prop:coniccombo}, it follows that every vector $\v x \in \mc K_\vomega$ satisfies $(\vpsi_i-\vomega)^T\v x \geq 0$.  Intersecting over all $i$ we obtain $\mc K_\vomega \subseteq \mc R_{\vomega, \Psi}$.
\end{proof}

We can generate an approximation cone for the pseudocodeword $\vomega$ by repeatedly transmitting a vector in $\mc K_\vomega$ (say, the sum of all constraint vectors of $\mc P(H)$ that are binding at $\vomega$) in a low signal-to-noise environment and recording the pseudocodewords that arise as the LP decoder's output.  If enough unique pseudocodewords are found, the insphere of the corresponding approximation cone ought to be a good approximation of the insphere of the recovery cone -- discussion regarding the corresponding block error rates can be found in Sections~\ref{sec:example} and~\ref{sec:performance}.  

In summary, Proposition~\ref{prop:coniccombo} states that any conic combination of constraint vectors that are active at the pseudocodeword $\vomega$ can be used to reliably transmit $\vomega$.  A lower bound for the probability of correctly recovering $\vomega$ over the AWGN channel can be maximized by transmitting a vector proportional to the center of the insphere of the recovery cone.  With this in mind, we define \emph{insphere modulation} as the method of mapping $\vomega$ to the center of an appropriate insphere, whether it is the insphere of the recovery cone or an approximation cone.

\section{A Method of Computing the Insphere of a Polyhedral Cone}\label{sec:insphere}

In this section we develop an iterative algorithm capable of approximating the insphere of a polyhedral cone to any degree of accuracy.  While our primary application will be to produce modulated versions of pseudocodewords that maximize a lower bound on the probability of block error, the results of this section may be of independent mathematical interest.  

Throughout this section, we use $B_r(\v x)$ to denote the closed sphere (or ball) of radius $r$ centered at $\v x$, i.e., $B_r(\v x) :=\{\v y \in \bbr^n \, | \, \norm{\v x - \v y} \leq r\}$.

\begin{defn}[\cite{Henrion2010}]\label{defn:insphere}
Let $\mc K$ be a polyhedral cone in $\bbr^n$.  The \emph{insphere} of $\mc K$ is the largest sphere contained in $\mc K$ whose center is in $B_1(\v 0)$, and the \emph{inradius} of $\mc K$ is the radius of the insphere.
\end{defn}

It is a consequence of Theorem 2.4 in~\cite{Henrion2010} that any polyhedral cone with a non-empty topological interior possesses a unique insphere.  Since polyhedral cones with empty interiors are of little use for transmitting pseudocodewords, we assume that our polyhedral cone is full dimensional and thus has a unique insphere.  Moreover, we observe that the center of such an insphere will necessarily have unit magnitude.

Throughout this section we will assume that we are in possession of an explicit description of $\mc K$ in terms of linear inequality constraints $K \v x \geq \v 0$. By scaling we can ensure that every row $\v k^T_i$ of $K$ has unit length - as such, $\v k^T_i \v x$ measures the Euclidean distance from the point $\v x$ to the hyperplane defined by the $i$th row of $K$. Following an idea similar to that employed to find inspheres of two-dimensional polyhedra given in~\cite{incircle}, the insphere can be found by maximizing a lower bound $z$ on the distance from a point in $\mc K \cap B_1(\v 0)$ to any defining hyperplane.  This idea is expressed succinctly by the following nonlinear convex optimization problem that, for reasons that will soon be become clear, we call Problem $P_\infty$.
\begin{align*}\label{prob:infty}\tag{$P_\infty$}
\text{max } & \, z \\
\text{subject to }& \, \v k^T_i \v x \geq z \text{ for all } i \\
 & \, \norm{\v x}^2 \leq 1
\end{align*}
If $(\v x_\infty, z_\infty)$ is an optimal solution to Problem $P_\infty$, then $\v x_\infty$ will give the center of the insphere and $z_\infty$ will give the inradius.  

The special structure of Problem $P_\infty$ admits an interesting algorithm that seeks to model the nonlinear boundary of the constraint $\norm{\v x}^2 = \sum_{i = 1}^n x_i^2 \leq 1$ by iteratively adding linear inequality constraints. Our algorithm operates by defining a sequence of linear programming problems and examining the optimal solutions of each.  At each stage, we also derive an upper and lower bounds on the inradius of $\mc K$.  The primary result of this section is that the sequence of optimal solutions converges to the center of the insphere.  Moreover, the upper and lower bounds on the inradius are easily computable and can be used to terminate computations once the desired degree of accuracy is attained.

To initialize our algorithm we form a linear program $P_0$ that contains all of the linear constraints of problem $P_\infty$ but eschews the nonlinear constraint in favor of $2n$ linear constraints that confine potential solutions to a bounded hypercube.  This is done to ensure that an optimal solution to this problem (and all subsequent problems) exist.  Formally, we defined problem $P_0$ as the following linear program:
\begin{align*}\label{prob:p0}\tag{$P_0$}
\text{max } & z \\
\text{subject to }& \v k^T_i \v x \geq z \text{ for all } i\\
 & -1 \leq x_i \leq 1 \text{ for all } i
\end{align*}

We formulate problem $P_{\ell+1}$ based off of problem $P_\ell$ as follows.  We first compute an optimal solution $(\v x_\ell, z_\ell)$ to problem $P_\ell$ (such a solution is readily obtained from any LP solver) and then compute the unit vector $\v u_\ell$ in the direction of $\v x_\ell$.  We note that the optimality of $\v x_\ell$ implies that $\norm{\v x_\ell} \geq 1$, which among other things implies that $\v x_\ell \not = \v 0$ and so $\v u_\ell$ is defined for all $\ell$.  Next, we record the value $w_\ell = \frac{z_\ell}{\norm{\v x_\ell}} = \min_i\{ \v k^T_i \v u_\ell  \}$, since $w_\ell$ is the radius of the largest sphere centered at the unit vector $\v u_\ell$ and contained within $\mc K$ and thus forms a lower bound on the inradius of $\mc K$.  We define Problem $P_{\ell + 1}$ by duplicating Problem $P_\ell$ and requiring that the additional constraint $ \v u_\ell^T \v x \leq 1$ also be met.  The general form of Problem $P_{\ell+1}$ is given as follows:
\begin{align*}\label{prob:plplus1}\tag{$P_{\ell+1}$}
\text{max } & z \\
\text{subject to }& \v k^T_i \v x \geq z \text{ for all } i\\
 & -1 \leq x_i \leq 1 \text{ for all } i \\
 & \v u_0^T \v x \leq 1 \\
 & \vdots \\
 & \v u_{\ell-1}^T \v x \leq 1 \\
 & \v u_\ell^T \v x \leq 1
\end{align*}
This process of iteratively approximating the insphere of a polyhedral cone is illustrated in Figure~\ref{fig:insphereIllustration}.

\begin{figure}
\centering
\begin{subfigure}[t]{.4\linewidth}
\includegraphics[width=\linewidth]{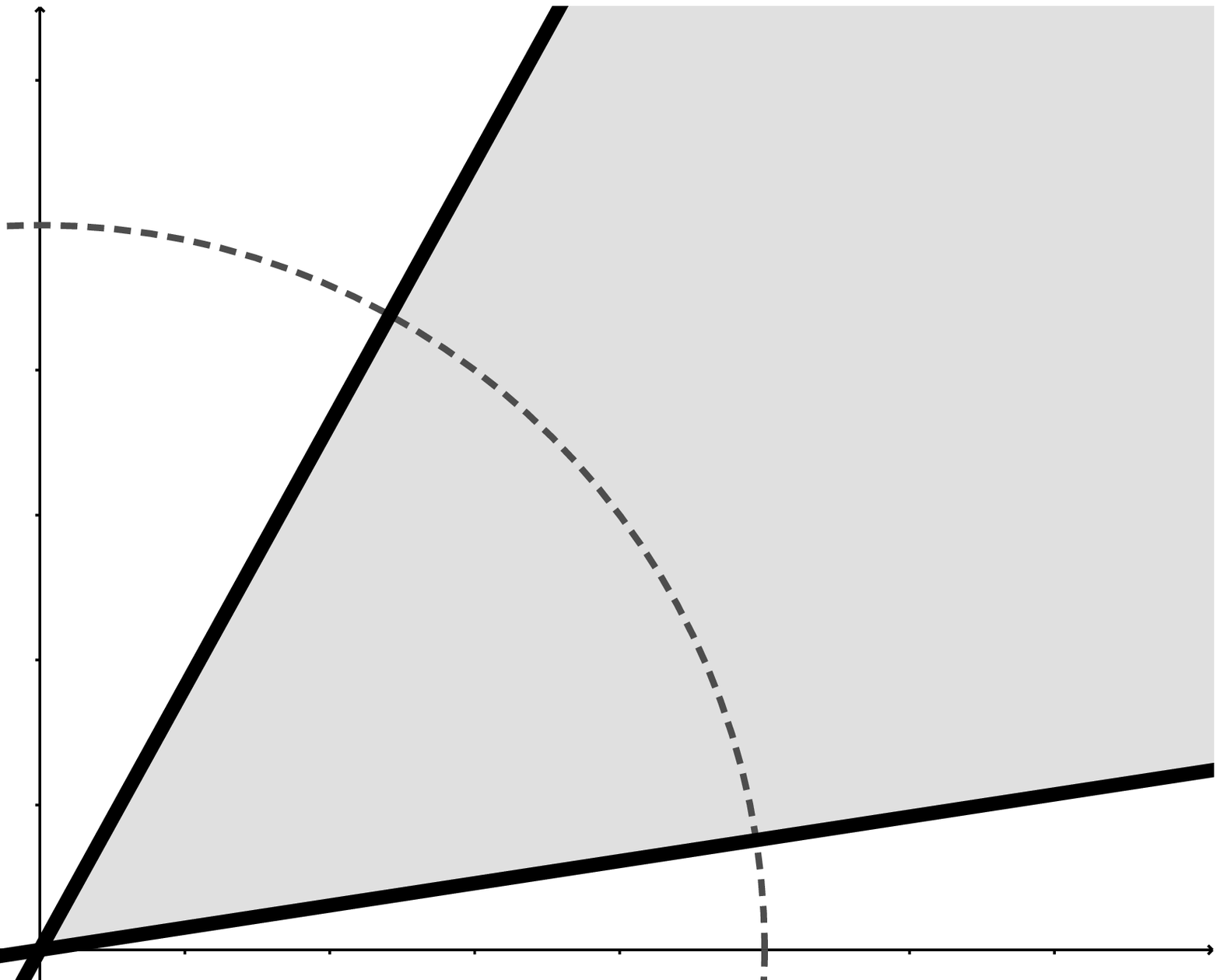}
\caption{A cone with a unit circle superimposed.}
\end{subfigure} \hspace{.5in}
\begin{subfigure}[t]{.4\linewidth}
\includegraphics[width=\linewidth]{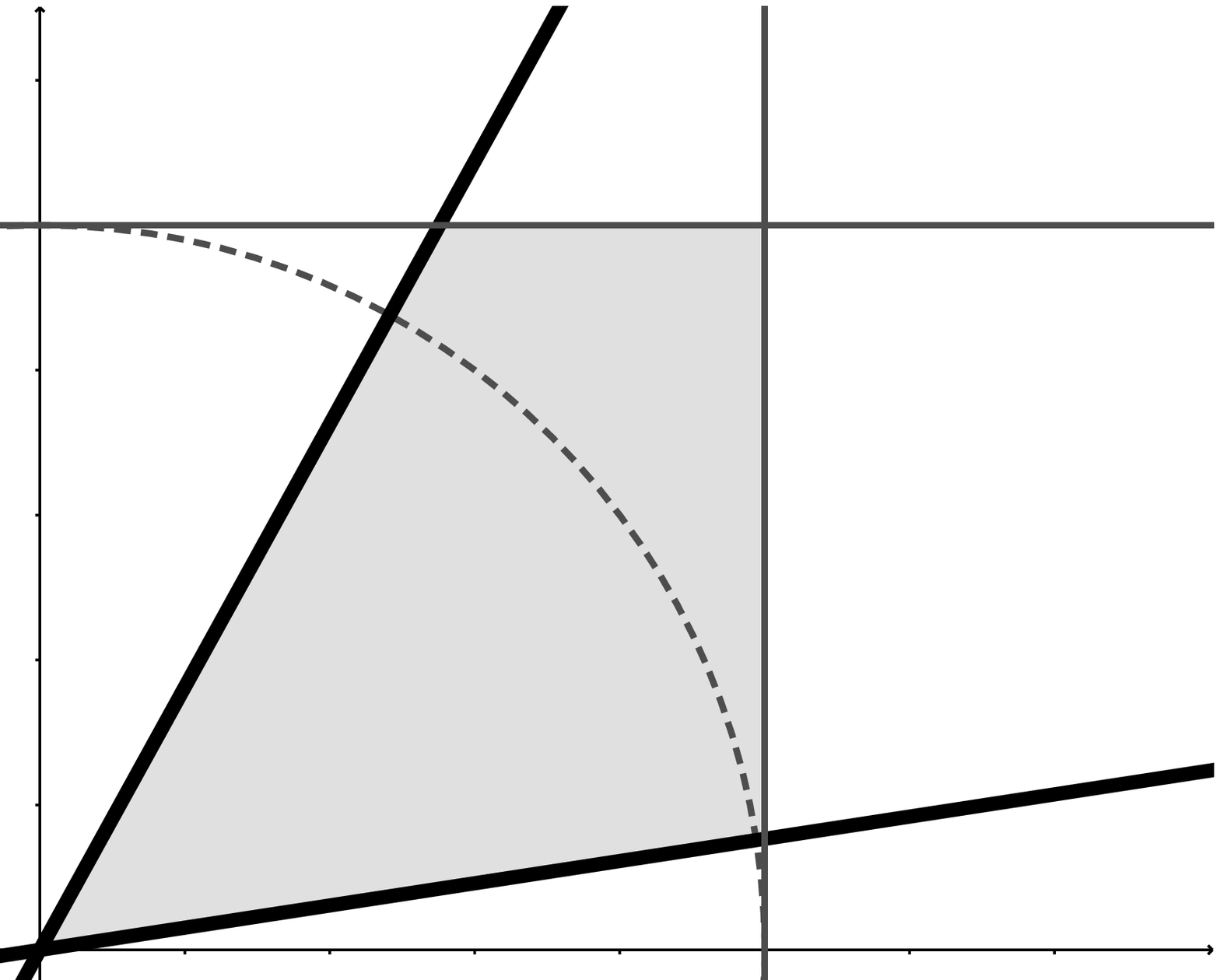}
\caption{Additional constraints are added to create the feasible region for Problem $P_0$.}
\end{subfigure}

\begin{subfigure}[t]{.4\linewidth}
\includegraphics[width=\linewidth]{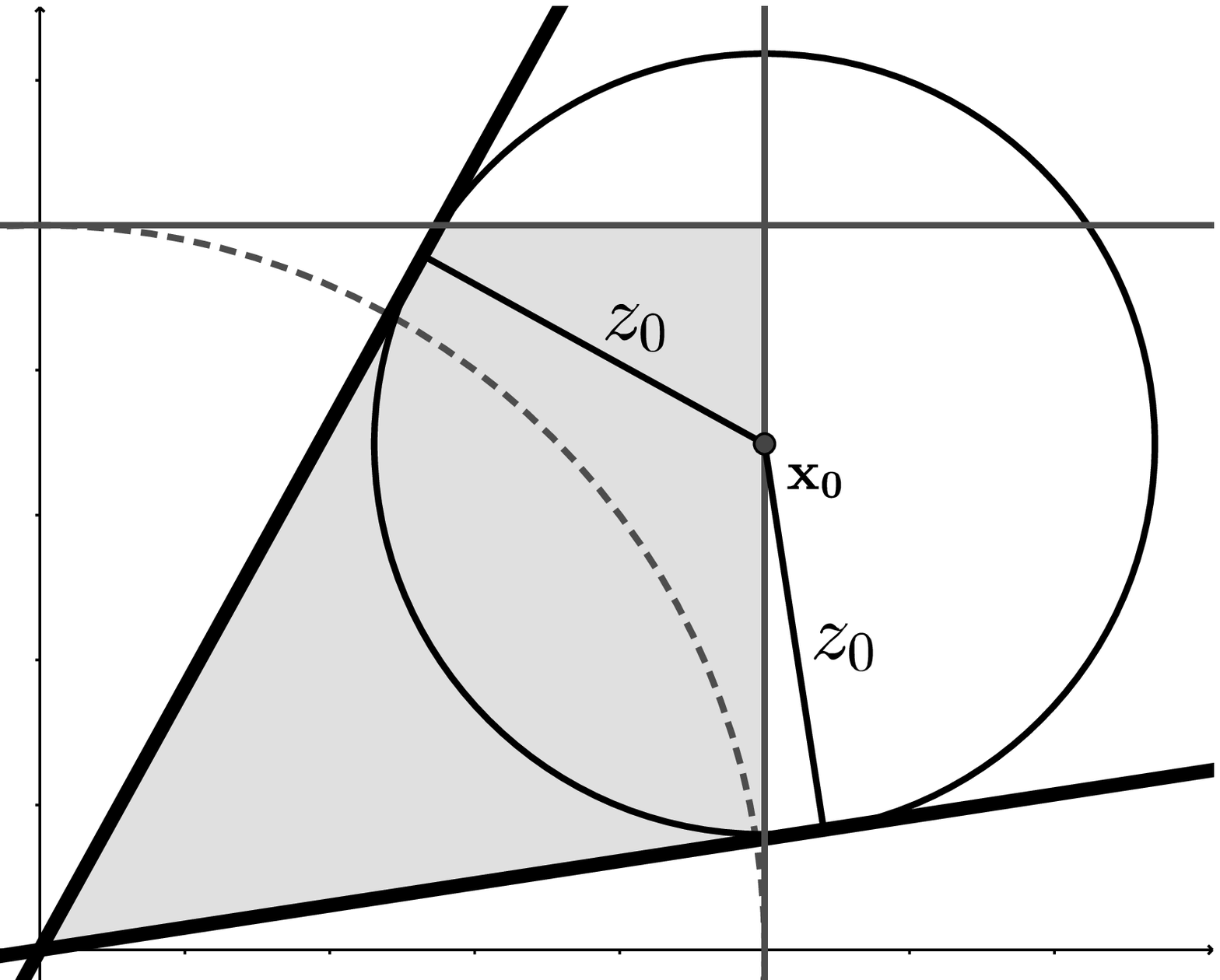}
\caption{The largest sphere contained in the cone and centered in Problem $P_0$'s feasible region is found.}
\end{subfigure}\hspace{.5in}
\begin{subfigure}[t]{.4\linewidth}
\includegraphics[width=\linewidth]{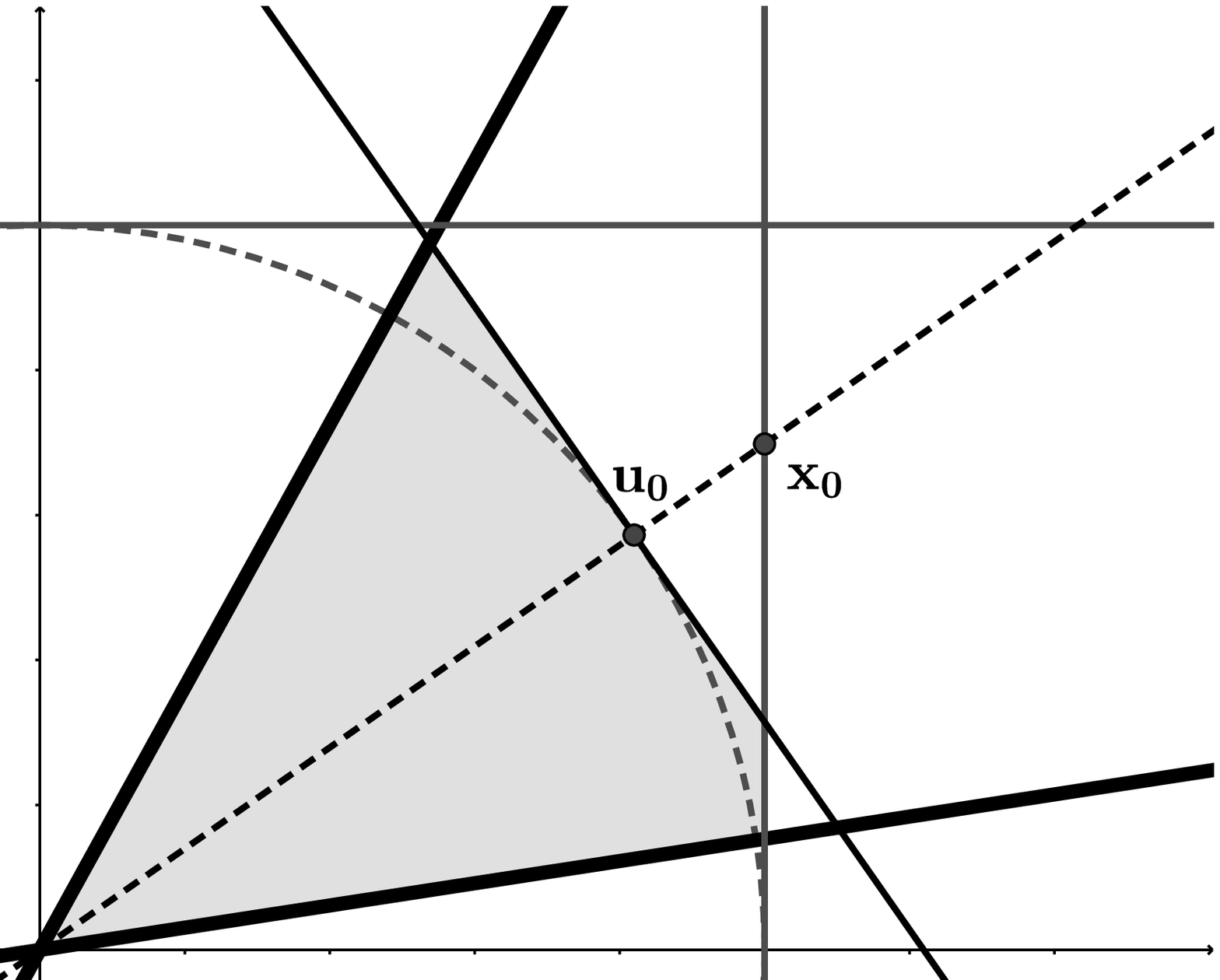}
\caption{A new constraint is added to form the feasible set of Problem $P_1$.}
\end{subfigure}

\begin{subfigure}[t]{.4\linewidth}
\includegraphics[width=\linewidth]{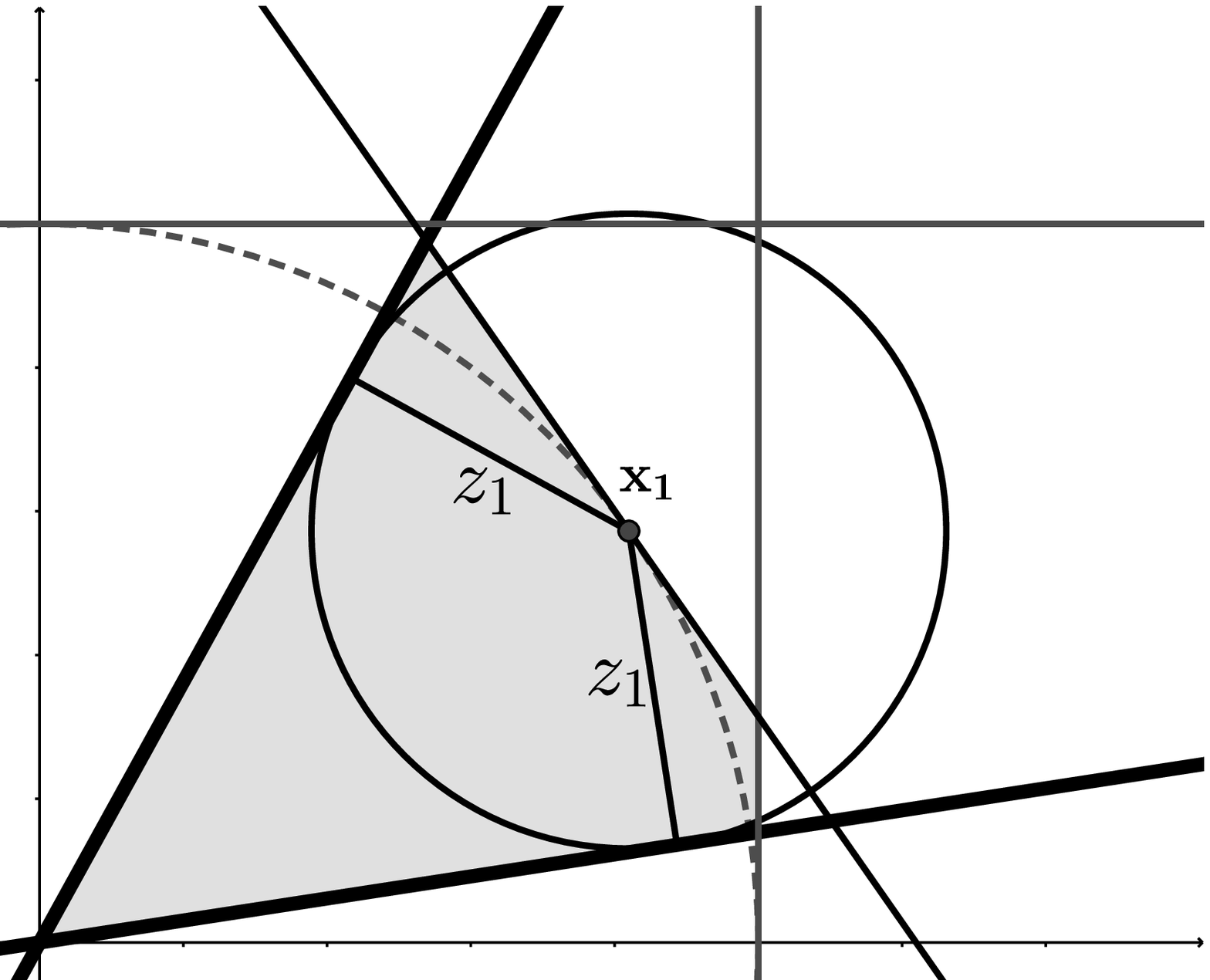}
\caption{The largest sphere contained in the cone and centered in Problem $P_1$'s feasible region is found.}
\end{subfigure}\hspace{.5in}
\begin{subfigure}[t]{.4\linewidth}
\includegraphics[width=\linewidth]{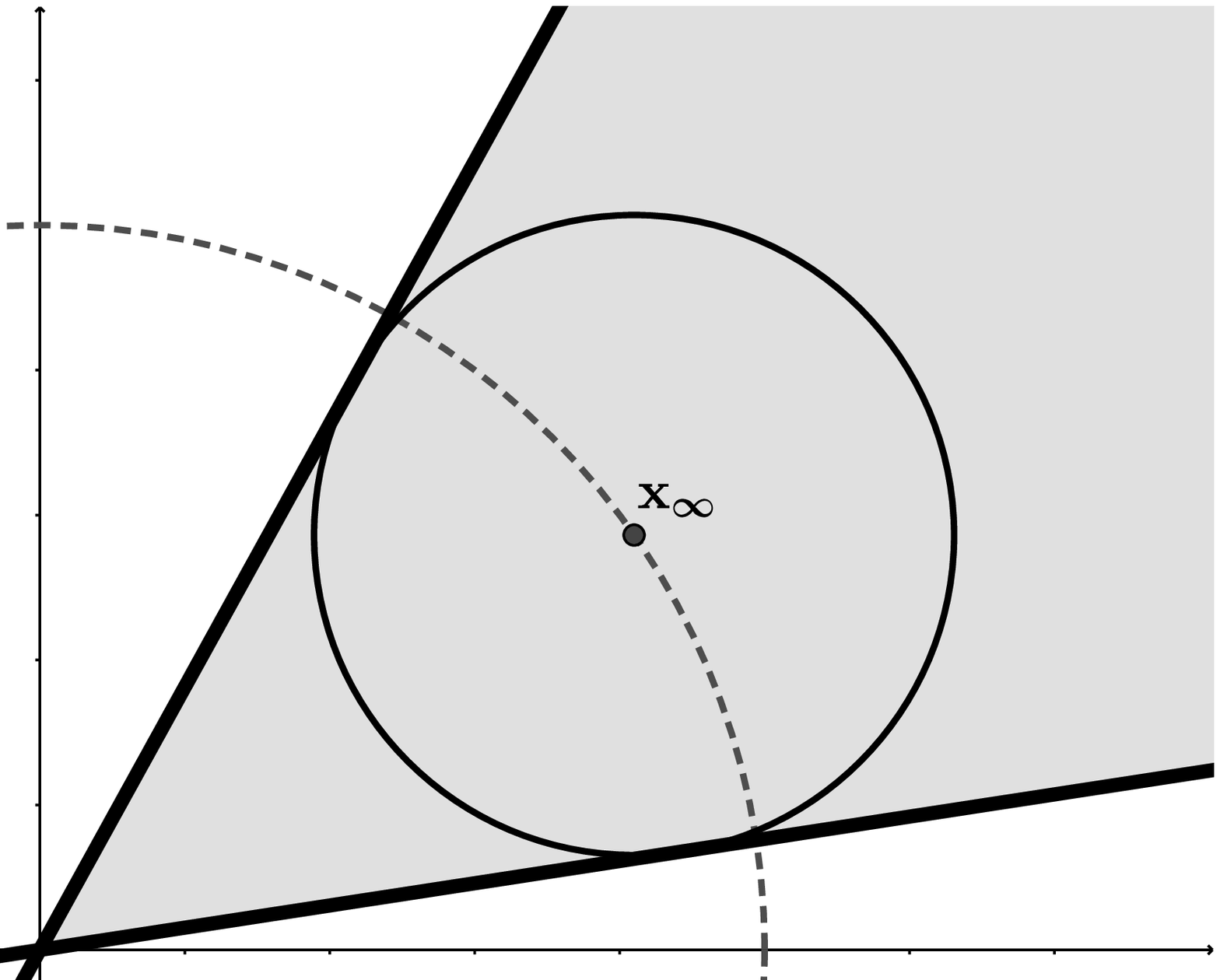}
\caption{Since $\v x_1$ lies on the unit circle, the insphere of the cone has been found.}
\end{subfigure}
\caption{An illustration of our iterative method for computing the insphere of a polyhedral cone to any degree of accuracy.  For this small example in $\bbr^2$, the exact insphere is found in only two steps.}
\label{fig:insphereIllustration}
\end{figure}

With these definitions one can see that the feasible sets for problems $P_0, P_1, \dots, P_\ell, \dots, P_\infty$ are nested within each other, forming a descending chain of sets with respect to inclusion.  Since all are maximization problems, it follows that the sequence of $z_\ell$'s is monotonically decreasing.  By comparison, the sequence of $w_\ell$'s has been observed to be generally increasing in nature, though not always monotonically (see Figure~\ref{fig:insphereIteration}).  Still, we know from a previous observation that $w_\ell$ forms a lower bound on $z_\infty$.  We obtain Proposition~\ref{prop:zzw} by combining these results.

\begin{proposition}\label{prop:zzw}
Let $\mc K$ be a polyhedral cone with non-empty topological interior and let $z_\infty$ denote its inradius. Let $(\v x_\ell, z_\ell)$ be an optimal solution to Problem $P_\ell$, and define $w_\ell = \frac{z_\ell}{\norm{\v x_\ell}}$.  With this notation, we have that
\[
z_0 \geq z_1 \geq \dots \geq z_\ell \geq \dots  \geq z_\infty.
\]  
Moreover, for any pair of integers $m$ and $n$ we have 
\[
z_m \geq z_\infty \geq w_n.
\]
\end{proposition}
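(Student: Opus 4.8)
The plan is to separate the claim into two independent pieces: the monotone descending chain $z_0 \geq z_1 \geq \cdots$, and the two-sided bound $z_m \geq z_\infty \geq w_n$. The chain is almost immediate from the nesting of feasible regions already remarked upon before the statement. First I would record that the feasible region of Problem $P_{\ell+1}$ is obtained from that of $P_\ell$ by appending the single extra constraint $\v u_\ell^T \v x \leq 1$, so each feasible region is contained in its predecessor. Since every $P_\ell$ maximizes the same objective $z$ over a progressively smaller set, the optimal values must satisfy $z_0 \geq z_1 \geq \cdots \geq z_\ell \geq \cdots$.

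To attach $z_\infty$ to the bottom of this chain I would show that the feasible region of $P_\infty$ sits inside that of \emph{every} $P_\ell$, whence $z_\ell \geq z_\infty$ for all $\ell$. The shared constraints $\v k_i^T \v x \geq z$ are common to all problems, so it suffices to check that any $\v x$ with $\norm{\v x}^2 \leq 1$ automatically satisfies the box constraints and every accumulated cutting-plane constraint. The box constraints follow from $x_i^2 \leq \norm{\v x}^2 \leq 1$, which gives $-1 \leq x_i \leq 1$. For the cuts, each $\v u_j$ is by construction a unit vector, so Cauchy--Schwarz yields $\v u_j^T \v x \leq \norm{\v u_j}\,\norm{\v x} = \norm{\v x} \leq 1$. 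Thus every feasible point of $P_\infty$ is feasible for each $P_\ell$; maximizing over the larger set can only increase the optimum, giving $z_\ell \geq z_\infty$ and completing the displayed chain.

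For the remaining inequality $z_\infty \geq w_n$, I would exhibit a concrete sphere inside $\mc K$ that certifies $w_n$ as a lower bound on the inradius. Recall from the construction that $w_n = \min_i \v k_i^T \v u_n$, where $\v u_n$ is the unit vector in the direction of $\v x_n$; I claim the ball $B_{w_n}(\v u_n)$ lies in $\mc K$. Indeed, for any $\v y$ with $\norm{\v y - \v u_n} \leq w_n$ and any unit-length row $\v k_i^T$ of $K$, Cauchy--Schwarz gives $\v k_i^T \v y = \v k_i^T \v u_n + \v k_i^T(\v y - \v u_n) \geq w_n - \norm{\v y - \v u_n} \geq 0$, so $\v y$ meets every defining inequality of $\mc K$. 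Since $\v u_n$ has unit norm it lies in the closed ball $B_1(\v 0)$, so $B_{w_n}(\v u_n)$ is an admissible candidate for the insphere of Definition~\ref{defn:insphere}; as the inradius $z_\infty$ is the largest radius attainable over all such centers, $z_\infty \geq w_n$. Combining this with $z_m \geq z_\infty$ from the previous step yields $z_m \geq z_\infty \geq w_n$ for arbitrary $m$ and $n$.

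The individual steps are short, and the only genuine subtlety --- the main obstacle --- is the feasible-region containment $P_\infty \subseteq P_\ell$: one must verify that the linear surrogates (the box constraints together with all accumulated cutting planes) are truly \emph{implied} by the nonlinear constraint $\norm{\v x}^2 \leq 1$. This is precisely where the unit-length normalization of the rows $\v k_i^T$ and the choice of unit directions $\v u_j$ do the work, via the single Cauchy--Schwarz estimate above. Once that containment is established, both the descending chain and the lower bound follow from elementary monotonicity of optima over nested feasible sets.
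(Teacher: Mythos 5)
Your proof is correct and takes essentially the same route as the paper: the paper also obtains the descending chain $z_0 \geq z_1 \geq \cdots \geq z_\infty$ from the nesting of the feasible sets of $P_0 \supseteq P_1 \supseteq \cdots \supseteq P_\infty$, and obtains $z_\infty \geq w_n$ from the earlier observation that $w_n = \min_i \v k_i^T \v u_n$ is the radius of the largest sphere centered at the unit vector $\v u_n$ and contained in $\mc K$, hence a lower bound on the inradius. The only difference is one of rigor, in your favor: you explicitly verify, via Cauchy--Schwarz, the two containments the paper treats as evident (that $\norm{\v x} \leq 1$ implies the box and cut constraints, and that $B_{w_n}(\v u_n) \subseteq \mc K$).
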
  

Proposition~\ref{prop:zzw} states that the inradius always lies between $w_\ell$ and $z_\ell$.  To compute the inradius to within $\delta$ of its true value, once can compute solutions to these problems until $z_\ell - w_\ell \leq \delta$.  Computational results, such as those summarized in Figure~\ref{fig:insphereIteration}, suggest that $z_\ell - w_\ell$ can always be driven to be arbitrarily small.  As we shall see from Theorem~\ref{thm:converge}, this is indeed the case. To prove this theorem, we first present a lemma.

\begin{figure}[h]
\centering
\includegraphics[scale = .7]{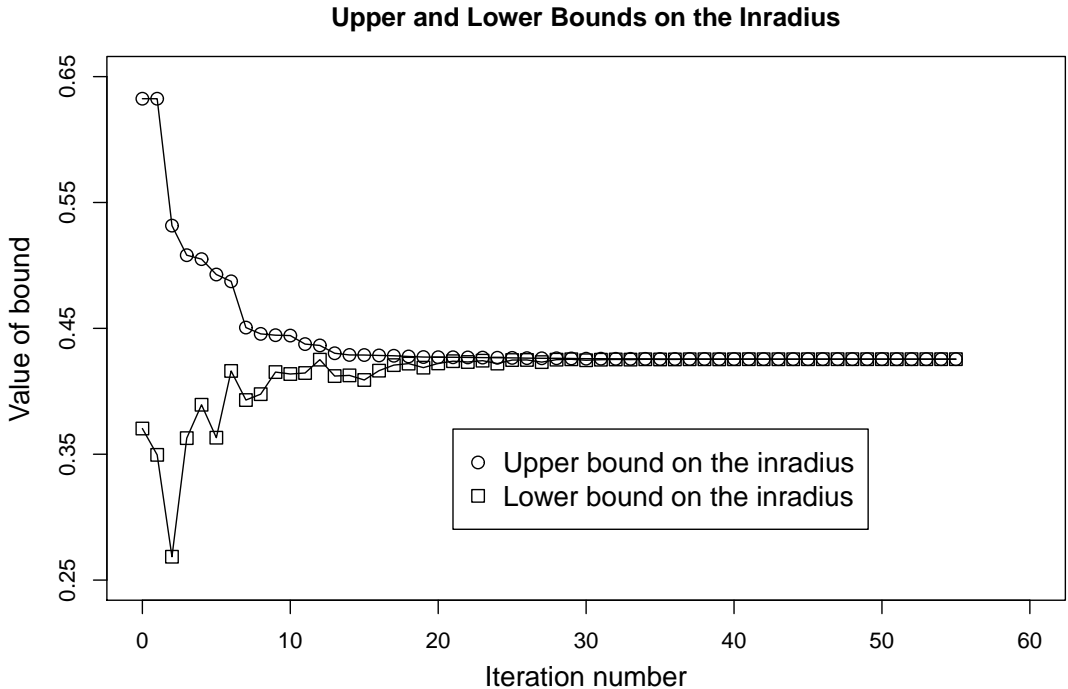}
\caption{A plot of upper bounds $z_\ell$ and lower bounds $w_\ell$ on the inradius for an approximation cone used to compute a transmittable version of the pseudocodeword $\vomega_3$ (see Section~\ref{sec:example}).  The cone is described by 35 constraints in $\bbr^{16}$.  In this example, it takes 56 iterations to ensure that the upper and lower bounds are within $10^{-5}$ of one another.}
\label{fig:insphereIteration}
\end{figure}

\begin{lemma}\label{lemma:ballCut}
Let $\mc K$ be a polyhedral cone with non-empty topological interior, let $(\v x_\ell, z_\ell)$ be an optimal solution to Problem $P_\ell$, and define $\v u_\ell = \frac{\v x_\ell}{\norm{\v x_\ell}}$.  For any $r$ with $0 \leq r < \norm{\v x_\ell - \v u_\ell}$, $B_r(\v x_\ell)$ and the feasible set for Problem $P_{\ell + 1}$ are disjoint.  
\end{lemma}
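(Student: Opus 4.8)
The plan is to exploit the single feature that distinguishes Problem $P_{\ell+1}$ from Problem $P_\ell$: the added constraint $\v u_\ell^T \v x \leq 1$. Every point in the feasible set of $P_{\ell+1}$ must satisfy this inequality, so it suffices to show that no point of $B_r(\v x_\ell)$ does --- that is, that $\v u_\ell^T \v y > 1$ for every $\v y$ with $\norm{\v y - \v x_\ell} \leq r$. Geometrically, the hyperplane $\v u_\ell^T \v x = 1$ is tangent to the unit sphere at $\v u_\ell$ (since $\v u_\ell^T \v u_\ell = 1$), and I expect the claim to follow once I verify that the distance from $\v x_\ell$ to this hyperplane is exactly $\norm{\v x_\ell - \v u_\ell}$, which is the quantity bounding $r$ in the hypothesis.

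First I would record the two elementary identities that drive everything. Since $\v u_\ell = \v x_\ell / \norm{\v x_\ell}$, a direct computation gives $\v u_\ell^T \v x_\ell = \norm{\v x_\ell}$, and because the preceding discussion already notes $\norm{\v x_\ell} \geq 1$ at optimality, the vector $\v x_\ell - \v u_\ell = (1 - 1/\norm{\v x_\ell})\,\v x_\ell$ points along $\v u_\ell$ and has length $\norm{\v x_\ell - \v u_\ell} = \norm{\v x_\ell} - 1$. (The degenerate case $\norm{\v x_\ell} = 1$ makes the hypothesis range $0 \leq r < 0$ empty, so the statement holds vacuously; hence I may assume $\norm{\v x_\ell} > 1$.) These identities confirm that $\v u_\ell$ is the foot of the perpendicular from $\v x_\ell$ to the hyperplane and that the relevant distance is precisely what appears in the hypothesis.

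Next, for an arbitrary $\v y \in B_r(\v x_\ell)$ I would expand $\v u_\ell^T \v y = \v u_\ell^T \v x_\ell + \v u_\ell^T(\v y - \v x_\ell)$ and apply the Cauchy--Schwarz inequality to the second term, using $\norm{\v u_\ell} = 1$ and $\norm{\v y - \v x_\ell} \leq r$, to obtain
\[
\v u_\ell^T \v y \geq \norm{\v x_\ell} - r.
\]
The hypothesis $r < \norm{\v x_\ell - \v u_\ell} = \norm{\v x_\ell} - 1$ then yields $\v u_\ell^T \v y \geq \norm{\v x_\ell} - r > 1$, so $\v y$ violates the constraint $\v u_\ell^T \v x \leq 1$ and therefore cannot lie in the feasible set of $P_{\ell+1}$. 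Since $\v y$ was arbitrary, the ball $B_r(\v x_\ell)$ and that feasible set are disjoint.

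Honestly, there is no serious obstacle here; the argument is a one-hyperplane separation estimate. The only place demanding care is the bookkeeping identity $\norm{\v x_\ell - \v u_\ell} = \norm{\v x_\ell} - 1$, which relies on $\v u_\ell$ being a \emph{positive} scalar multiple of $\v x_\ell$ so that no cancellation occurs, together with remembering to dispatch the boundary case $\norm{\v x_\ell} = 1$ separately. Everything else is Cauchy--Schwarz applied to the newly introduced cutting constraint.
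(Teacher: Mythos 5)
Your proof is correct and follows the same route the paper gestures at: both arguments rest on the observation that the newly added constraint $\v u_\ell^T \v x \leq 1$ defines a hyperplane separating $B_r(\v x_\ell)$ from the feasible set of Problem $P_{\ell+1}$, the point being that the distance from $\v x_\ell$ to that hyperplane is exactly $\norm{\v x_\ell - \v u_\ell} = \norm{\v x_\ell} - 1$. The paper leaves this as an analogy with the separating hyperplane theorem of~\cite{bt}, whereas you carry out the explicit Cauchy--Schwarz computation (including the vacuous case $\norm{\v x_\ell} = 1$), so your write-up is simply a self-contained version of the same argument.
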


The proof of Lemma~\ref{lemma:ballCut} is virtually identical to the proof of the separating hyperplane theorem - see, e.g., pages 170-172 of~\cite{bt}.  This theorem states that given a non-empty, closed convex set $S$ in $\bbr^n$ and a point $\v x \not \in S$, there exists a hyperplane in $\bbr^n$ for which $\v x$ is on one side and the entirety of $S$ lies on the other.  Lemma~\ref{lemma:ballCut} merely makes this specific, stating that a small sphere around $\v x_\ell$ can be separated from $\v u_\ell$ by the inequality $\v u_\ell^T \v x \leq 1$, which is precisely the inequality added to Problem $P_{\ell}$ to form Problem $P_{\ell+1}$.  

\begin{theorem}\label{thm:converge}
Let $\mc K$ be a polyhedral cone with non-empty topological interior, and let  $\v x_\infty$ and $z_\infty$ denote the center of its insphere and its inradius, respectively. If $(\v x_\ell, z_\ell)$ is an optimal solution to Problem $P_\ell$, then $\v x_\ell$ converges to $\v x_\infty$ and $z_\ell$ converges to $z_\infty$.
\end{theorem}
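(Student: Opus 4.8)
The plan is to isolate the single quantitative fact that drives everything: the norms $\norm{\v x_\ell}$ converge to $1$. Once this is established, both conclusions follow quickly. Indeed, since $w_\ell = z_\ell / \norm{\v x_\ell}$ and the $z_\ell$ are monotonically decreasing and bounded below by $z_\infty$ (Proposition~\ref{prop:zzw}), the sequence $z_\ell$ converges to some limit $z^\ast$; if $\norm{\v x_\ell} \to 1$ then $w_\ell \to z^\ast$ as well, and sandwiching via $z_\ell \geq z_\infty \geq w_\ell$ forces $z^\ast = z_\infty$, giving $z_\ell \to z_\infty$. So the bulk of the work is the norm convergence.

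To show $\norm{\v x_\ell} \to 1$, I would first record a spacing estimate. Because $\v u_\ell = \v x_\ell / \norm{\v x_\ell}$ and $\norm{\v x_\ell} \geq 1$, a direct computation gives $\norm{\v x_\ell - \v u_\ell} = \norm{\v x_\ell} - 1$. Lemma~\ref{lemma:ballCut} then says the ball of radius $r < \norm{\v x_\ell} - 1$ about $\v x_\ell$ misses the feasible set of Problem $P_{\ell+1}$; since the feasible sets are nested, every later iterate $\v x_{\ell'}$ with $\ell' > \ell$ lies in that feasible set and is therefore excluded from the ball, yielding $\norm{\v x_{\ell'} - \v x_\ell} \geq \norm{\v x_\ell} - 1$ for all $\ell' > \ell$. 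Now suppose $\norm{\v x_\ell} \not\to 1$. Since each $\norm{\v x_\ell} \geq 1$, there is $\epsilon > 0$ and a subsequence with $\norm{\v x_{\ell_k}} \geq 1 + \epsilon$, and the spacing estimate gives $\norm{\v x_{\ell_k} - \v x_{\ell_j}} \geq \epsilon$ for all $j < k$. But every iterate lies in the compact hypercube $[-1,1]^n$ (each $P_\ell$ retains those box constraints), so this subsequence must have a convergent, hence Cauchy, sub-subsequence --- contradicting the uniform separation $\epsilon$. Therefore $\norm{\v x_\ell} \to 1$.

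It remains to prove $\v x_\ell \to \v x_\infty$, and here I would argue by subsequences using the uniqueness of the insphere. The iterates lie in the compact set $[-1,1]^n$, so any subsequence has a convergent sub-subsequence $\v x_{\ell_k} \to \v x^\ast$. Passing to the limit in the constraints $\v k_i^T \v x_{\ell_k} \geq z_{\ell_k}$ (with $z_{\ell_k} \to z_\infty$) and using $\norm{\v x^\ast} = \lim \norm{\v x_{\ell_k}} = 1$, we find that $(\v x^\ast, z_\infty)$ is feasible for Problem $P_\infty$ and attains the optimal value $z_\infty$; hence it is an optimal solution. Since $\mc K$ has non-empty interior its insphere --- equivalently the maximizer of $P_\infty$ --- is unique, so $\v x^\ast = \v x_\infty$. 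As every convergent subsequence of the bounded sequence $\{\v x_\ell\}$ has the same limit $\v x_\infty$, the full sequence converges to $\v x_\infty$.

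The main obstacle is the norm-convergence step: a priori the iterates could wander within the hypercube without their norms settling down to $1$. The resolution is to convert the geometric separation supplied by Lemma~\ref{lemma:ballCut} into a uniform lower bound on distances between (arbitrarily far apart) iterates, and then to invoke compactness of the ambient box to rule out a persistently large norm. The only points requiring care are verifying the identity $\norm{\v x_\ell - \v u_\ell} = \norm{\v x_\ell} - 1$ (which uses $\norm{\v x_\ell}\geq 1$) and checking that the nesting of the feasible sets upgrades the one-step separation of the lemma to separation from \emph{all} future iterates.
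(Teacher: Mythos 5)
Your proposal is correct, and its engine is the same as the paper's: Lemma~\ref{lemma:ballCut} plus the nesting of feasible sets, then compactness of the box $[-1,1]^n$ and a Cauchy/separation contradiction. Your reformulation of the target as $\norm{\v x_\ell} \to 1$ is equivalent to the paper's target that $\v x_\ell - \v u_\ell \to \v 0$, via exactly the identity $\norm{\v x_\ell - \v u_\ell} = \norm{\v x_\ell} - 1$ you verify; and your ``spacing estimate'' (all later iterates stay at distance at least $\norm{\v x_\ell}-1$ from $\v x_\ell$) is the same use of the lemma the paper makes inside its contradiction, just extracted as a standalone statement. Where you genuinely depart is the endgame, and there your version is tighter. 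The paper concludes by asserting that since $\v x_\ell - \v u_\ell \to \v 0$ and both sequences are bounded, they ``converge to a common limit'' --- as literally stated this inference is invalid (two bounded sequences whose difference tends to zero can both fail to converge, e.g.\ a pair of identical oscillating unit vectors), and the paper leans on the uniqueness of the insphere to patch over it without spelling out the subsequence bookkeeping. You instead run the standard and fully rigorous argument: every convergent subsequence of the bounded sequence $\{\v x_\ell\}$ has a limit that is a unit vector, feasible for $P_\infty$ (passing to the limit in $\v k_i^T \v x_{\ell_k} \geq z_{\ell_k}$), and attains the value $z_\infty$, hence equals $\v x_\infty$ by uniqueness of the insphere; since all subsequential limits coincide, the whole sequence converges. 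Your explicit sandwich $z_\ell \geq z_\infty \geq w_\ell$ with $w_\ell = z_\ell/\norm{\v x_\ell} \to z^\ast$ to force $z_\ell \to z_\infty$ is likewise cleaner than the paper's appeal to ``a common limit'' of $z_\ell$ and $w_\ell$. In short: same key lemma and same contradiction, but your write-up closes a small logical gap in the paper's final paragraph.
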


\begin{proof}
Define $\v u_\ell := \frac{\v x_\ell}{\norm{\v x_\ell}}$ and suppose for the sake of a contradiction that $\v x_\ell - \v u_\ell$ does not converge to the zero vector.  This implies that there is an infinite subsequence $\ell_k$ and some $\epsilon_0 > 0$ so that $\norm{\v x_{\ell_k} - \v u_{\ell_k}} \geq \epsilon_0$ for all $k$.  Since the feasible sets of all Problems $P_\ell$ are contained within the feasible set for Problem $P_0$, which is bounded, the sequence $\v x_{\ell_k}$ is bounded.  The Bolzano-Weierstrass Theorem implies that there is a subsequence $\v x_{\ell_{k_j}}$ of $\v x_{\ell_k}$ that converges to some vector.  

Since  $\v x_{\ell_{k_j}}$ is a convergent sequence, it must be a Cauchy sequence.  Thus, there a $J$ so that for all $j', j'' \geq J$ we have $\norm{\v x_{\ell_{k_{j'}}} - \v x_{\ell_{k_{j''}}}} \leq \epsilon_0 / 4$.  The triangle inequality implies that for all $j' \geq J$ the vector $\v x_{\ell_{k_{j'}}}$ is in $B_{\epsilon_0 / 2}(\v x_{\ell_{k_{J}}})$.  Since $\norm{\v x_{\ell_{k_{j}}} - \v u_{\ell_{k_{j}}}} \geq \epsilon_0$ for all $j$, Lemma~\ref{lemma:ballCut} implies that $B_{\epsilon_0 / 2}(\v x_{\ell_{k_{J}}})$ is disjoint from the feasible set of Problem $P_{\ell_{j'}}$ for all $j' > J$, which implies that $\v x_{\ell_{k_{J + 1}}}$ is infeasible for Problem $P_{\ell_{k_{J+1}}}$.  This is a contradiction, since $\v x_{\ell_{k_{J + 1}}}$ was taken to be an optimal solution to Problem  $P_{\ell_{k_{J+1}}}$.  We conclude that $\v x_\ell - \v u_\ell$ converges to the zero vector.  

Since $\v x_\ell - \v u_\ell$ converges to the zero vector and the individual sequences $\v x_\ell$ and $\v u_\ell$ are both bounded, we can conclude that the individual sequences $\v x_\ell$ and $\v u_\ell$ converge to a common limit $\v x^\ast$, which is necessarily a unit vector.  With $w_\ell = \frac{z_\ell}{\norm{\v x_\ell}}$, the convergence of $\v x_\ell$ and $\v u_\ell$ to a common limit implies that $z_\ell$ and $w_\ell$ converge to a common limit, which by Proposition~\ref{prop:zzw} must be the inradius of $\mc K$.  This, along with the uniqueness of the insphere (see Theorem 2.4 in~\cite{Henrion2010}) implies that $\v x^\ast = \v x_\infty$. 
\end{proof}

%

\section{Extending The Modulation Scheme via $\mc C$-Symmetry}\label{sec:csym}

Using the results of Sections~\ref{sec:bestVector} and~\ref{sec:insphere}, one can obtain a transmitted vector that can reliably recover a linear programming pseudocodeword.  While finding such a vector takes some work, once computed it can be easily modified to allow for the broadcast of many more pseudocodewords.

In order to show that the probability of block error is independent of the codeword transmitted, the concept of \emph{$\mc C$-symmetry} is introduced in~\cite{Fel00}\footnote{Though the term ``$\mc C$-symmetry'' is never used in~\cite{fwk05}, the relevant concepts are present there as well.}.

\begin{defn}[\cite{Fel00}]\label{defn:relativePoint}
Let $\mc C$ be a code presented by the parity-check matrix $H$, and let $\v c \in \mc C$ be given.  For any point $\v x \in \mc P(H)$, the \emph{relative point} $\v x^{\v c}$ of $\v x$ with respect to $\v c$ is the point whose $i$th coordinate is given by $|x_i - c_i|$ for all $i = 1, 2, \dots, n$.
\end{defn}

\begin{theorem}[\cite{Fel00}]\label{thm:cSymmetry}
Let $\mc C$ be a code presented by the parity-check matrix $H$. For any point $\v x \in \mc P(H)$ and any codeword $\v c \in \mc C$, the relative point $\v x^{\v c}$ is also an element of $\mc P(H)$.  Further, if $\vomega$ is an extreme point of $\mc P(H)$, then $\vomega^{\v c}$ is also an extreme point of $\mc P(H)$.
\end{theorem}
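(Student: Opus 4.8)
The plan is to realize the relative-point operation as a single affine involution of $\bbr^n$ and to show that this map carries $\mc P(H)$ onto itself; preservation of extreme points then comes for free from the affine structure. First I would record a closed form for the relative point. Because every codeword is binary, when $c_i = 0$ we have $|x_i - c_i| = x_i$ and when $c_i = 1$ we have $|x_i - 1| = 1 - x_i$ (using $0 \le x_i \le 1$), so $\v x^{\v c} = \phi_{\v c}(\v x)$, where $\phi_{\v c}(\v x) := D\v x + \v c$ and $D$ is the diagonal matrix with $i$th entry $+1$ if $c_i = 0$ and $-1$ if $c_i = 1$. Since $D$ is its own inverse and $D\v c = -\v c$, a direct check gives $\phi_{\v c} \circ \phi_{\v c} = \mathrm{id}$, so $\phi_{\v c}$ is an affine bijection of $\bbr^n$ equal to its own inverse.

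Next I would verify that $\phi_{\v c}$ sends $\mc P(H)$ into itself by checking the two families of defining constraints. The box constraints $0 \le x_i^{\v c} \le 1$ are immediate, since each $x_i^{\v c}$ equals $x_i$ or $1 - x_i$. For the parity constraints I would fix a row index $j$ and an odd-cardinality set $S \subseteq N(\v h_j)$, split the sum defining the $(j,S)$ constraint evaluated at $\v x^{\v c}$ according to whether $c_i = 0$ or $c_i = 1$, and show that this value equals that of the $(j, S')$ constraint evaluated at $\v x$, where $S' := S \triangle T_j$ and $T_j := \{i \in N(\v h_j) : c_i = 1\}$.

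The crux, and the step I expect to be the main obstacle, is arguing that $S'$ is again of odd cardinality, so that $(j, S')$ is a legitimate defining constraint of $\mc P(H)$. This is precisely where the codeword hypothesis is used: since $H \v c = \v 0$ over $\F_2$, the parity check indexed by $\v h_j$ forces $|T_j|$ to be even, and therefore $|S \triangle T_j| = |S| + |T_j| - 2|S \cap T_j|$ is odd. As $\v x \in \mc P(H)$ satisfies the $(j, S')$ constraint, it follows that $\v x^{\v c}$ satisfies the $(j, S)$ constraint; ranging over all admissible $(j, S)$ yields $\v x^{\v c} \in \mc P(H)$. Combined with $\phi_{\v c} = \phi_{\v c}^{-1}$, this shows $\phi_{\v c}$ is a bijection of $\mc P(H)$ onto itself and establishes the first assertion.

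For the second assertion I would use that an affine bijection preserves convex combinations and hence extreme points. Concretely, if $\vomega^{\v c}$ could be written as $\lambda \v p + (1 - \lambda)\v q$ with $\v p \ne \v q$ in $\mc P(H)$ and $\lambda \in (0,1)$, then applying $\phi_{\v c}^{-1}$ would present $\vomega$ as $\lambda\,\phi_{\v c}(\v p) + (1 - \lambda)\,\phi_{\v c}(\v q)$, a convex combination of the distinct points $\phi_{\v c}(\v p), \phi_{\v c}(\v q) \in \mc P(H)$, contradicting the extremality of $\vomega$. Hence $\vomega^{\v c}$ is an extreme point of $\mc P(H)$, as claimed.
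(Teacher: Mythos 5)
Your proof is correct. The paper itself offers no proof of this theorem---it is imported from~\cite{Fel00}---but your argument is exactly the canonical one, and its ingredients are the very machinery the paper deploys right after the statement: your affine involution $\phi_{\v c}(\v x) = D\v x + \v c$ is the paper's map $\zeta_{\v c}$ (whose linear part is the isometry of Proposition~\ref{prop:tSymmetry}), and your set $S' = S \triangle T_j$ with the parity argument ($|T_j|$ even because $H\v c = \v 0$ over $\F_2$) is precisely the set $S^{\v c} = S \triangle (\supp(\v c) \cap N(\v h_j))$ underlying Proposition~\ref{prop:activeOrbit}. You also correctly identified the one place the codeword hypothesis is indispensable---the oddness of $|S'|$, which keeps $(j,S')$ a legitimate defining constraint---and handled it properly.
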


Given a codeword $\v c \in \mc C$, define the map $\zeta_{\v c}: \mc P(H) \to \mc P(H)$  by $\zeta_{\v c}(\v x) = \v x^{\v c}$.  Since $\zeta_{\v c}$ is an isometry of $\bbr^n$, it is also an isometry of $\mc P(H)$.  Additionally, the map $\zeta_{\v c}$ defines a group action of the code $\mc C$ (viewed as an abelian group under vector addition) on $\mc P(H)$ as well as on the set of extreme points of $\mc P(H)$, i.e., the set of LP pseudocodewords.

For a row $\v h_j$ of $H$ and an odd-sized subset $S$ of $N(\v h_j)$, let $(j,S)$ denote the constraint of the fundamental polytope given by $\sum_{i \in S} x_i + \sum_{i' \in N(\v h_j)\setminus S}(1- x_{i'}) \leq |N(\v h_j)| - 1$, which we rewrite as $-\sum_{i \in S} x_i + \sum_{i' \in N(\v h_j)\setminus S} x_{i'} \geq 1 - |S|$. For a codeword $\v c \in \mc C$, let $S^{\v c} := S \triangle (\supp(\v c) \cap N(\v h_j))$, where the ``$\triangle$'' operator denotes the symmetric difference of two sets. By following these definitions, one can prove the following proposition:

\begin{proposition}\label{prop:activeOrbit}
Let $\mc C$ be a binary code presented by the parity-check matrix $H$ with fundamental polytope $\mc P(H)$. Let $j$ be the index of a row $\v h_j$ of $H$ and $S$ be an odd-sized subset of $N(\v h_j)$, and let $\v x \in \mc P(H)$ and $\v c \in C$ be given.  If the constraint defined by the pair $(j,S)$ is active at $\v x$, then the constraint $(j,S^{\v c})$ is active at the relative point $\v x^{\v c}$.
\end{proposition}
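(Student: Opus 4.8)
The plan is to compare, term by term, the left-hand side of the constraint $(j,S)$ evaluated at $\v x$ against the left-hand side of the constraint $(j, S^{\v c})$ evaluated at the relative point $\v x^{\v c}$, and to show these two quantities are equal. Since the constraint $(j,S)$ is active at $\v x$ precisely when its left-hand side $\sum_{i \in S} x_i + \sum_{i' \in N(\v h_j)\setminus S}(1 - x_{i'})$ equals $|N(\v h_j)| - 1$, establishing this equality will immediately force the constraint $(j, S^{\v c})$ to be active at $\v x^{\v c}$ as well.

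First I would record the explicit form of the relative point. Because $\v c$ is binary and every coordinate of $\v x$ lies in $[0,1]$, the coordinate $x_i^{\v c} = |x_i - c_i|$ equals $x_i$ when $c_i = 0$ and equals $1 - x_i$ when $c_i = 1$. Writing $T := \supp(\v c) \cap N(\v h_j)$ for the set of positions of $N(\v h_j)$ that $\v c$ flips, we have by definition $S^{\v c} = S \triangle T$. Before comparing left-hand sides I would verify that $(j, S^{\v c})$ is a genuine constraint of $\mc P(H)$, i.e.\ that $|S^{\v c}|$ is odd. This is where the codeword hypothesis enters: $H \v c = \v 0$ over $\F$ forces $\sum_{i \in N(\v h_j)} c_i = |T| \equiv 0 \pmod 2$, so $|T|$ is even; combined with $|S|$ odd and the identity $|S \triangle T| = |S| + |T| - 2|S \cap T|$, this shows $|S^{\v c}|$ is odd.

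The core of the argument is a four-way case analysis on each index $i \in N(\v h_j)$, according to whether $i \in S$ and whether $i \in T$. In each case membership $i \in S^{\v c}$ is dictated by the symmetric difference while the value $x_i^{\v c}$ is dictated by membership in $T$, and a one-line computation shows that the contribution of index $i$ to the left-hand side of $(j,S)$ at $\v x$ (namely $x_i$ if $i \in S$ and $1 - x_i$ otherwise) coincides with its contribution to the left-hand side of $(j, S^{\v c})$ at $\v x^{\v c}$. For instance, when $i \in S \cap T$ we have $i \notin S^{\v c}$ and $x_i^{\v c} = 1 - x_i$, so the contribution $1 - x_i^{\v c} = x_i$ matches; the other three cases are entirely analogous. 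Summing the matched contributions over all $i \in N(\v h_j)$ yields equality of the two left-hand sides, completing the proof.

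I expect the only care required is bookkeeping: keeping straight the distinct roles of $S$ (which decides whether a term is $x_i$ or $1 - x_i$) and of $T$ (which decides whether $\v x^{\v c}$ flips coordinate $i$), and observing that the flip and the symmetric difference cancel exactly so that each term is preserved. There is no genuine obstacle beyond this routine verification; the oddness check for $|S^{\v c}|$ is the single place where $H\v c = \v 0$ is actually invoked, and everything else is an elementary consequence of $x_i \in [0,1]$ and $c_i \in \{0,1\}$.
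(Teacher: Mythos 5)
Your proof is correct and takes exactly the approach the paper intends: the paper states this proposition without an explicit proof (saying only that it follows ``by following these definitions''), and your term-by-term four-case verification is precisely that definitional argument. Your check that $|S^{\v c}|$ is odd via $H\v c = \v 0$ (so that $(j, S^{\v c})$ is in fact a constraint of $\mc P(H)$) is a necessary detail the paper glosses over, and you handle it correctly.
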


Using Proposition~\ref{prop:activeOrbit}, we see that the set of constraint vectors of the fundamental polytope that are active at $\vomega^{\v c}$ can be obtained by first taking the set of constraints vectors that are active at $\vomega$ and multiplying each component in the support of $\v c$ by -1.  Moreover, if $\Psi = \{\vpsi_1, \vpsi_2, \dots, \vpsi_r\}$ is a set of LP pseudocodewords and $\Psi^{\v c} := \{\vpsi_1^{\v c}, \vpsi_2^{\v c}, \dots, \vpsi_r^{\v c}\}$ , then the approximation cone $\mc R_{\vomega^{\v c}, \Psi^{\v c}}$ can be obtained by applying the same transformation to $\mc R_{\vomega, \Psi}$. These facts are summarized in the following proposition. 

\begin{proposition}\label{prop:tSymmetry}
Let $\mc C$ be a code presented by the parity-check matrix $H$, let $\v c$ be a codeword and $\vomega$ be a linear programming pseudocodeword, and let $\Psi = \{\vpsi_1, \vpsi_2, \dots, \vpsi_r\}$ be a set of linear programming pseudocodewords. The map $\phi_{\v c}: \bbr^n\to \bbr^n$ defined by $x_i \to (-1)^{c_i} x_i$ for all $i = 1, 2, \dots, n$ is an isometry between the recovery cones $\mc K_\vomega$ and $\mc K_{\vomega^{\v c}}$ as well as an isometry between the approximation cones $\mc R_{\vomega, \Psi}$ and $\mc R_{\vomega^{\v c}, \Psi^{\v c}}$.
\end{proposition}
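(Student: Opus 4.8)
The plan is to exploit the fact that $\phi_{\v c}$ is nothing more than coordinate-wise multiplication by the signs $(-1)^{c_i} \in \{+1, -1\}$ (recall $\mc C$ is binary, so each $c_i \in \{0,1\}$). Its matrix is therefore diagonal with entries $\pm 1$, which makes $\phi_{\v c}$ an orthogonal linear involution: it preserves the Euclidean inner product, satisfies $\phi_{\v c}\circ\phi_{\v c} = \mathrm{id}$, and is a global isometry of $\bbr^n$. Consequently, to prove each of the two claims it suffices to show that $\phi_{\v c}$ carries one cone bijectively onto the other, since distance-preservation then comes for free from orthogonality. Because both the recovery cone and the approximation cone are specified through finite families of vectors -- generators in the first case, constraint normals in the second -- I would reduce each claim to checking that $\phi_{\v c}$ permutes the appropriate family and then invoke linearity.

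For the recovery cone, recall that $\mc K_\vomega$ is generated by those rows of $A$ whose constraint is binding at $\vomega$, where $A\v x \ge \v b$ describes $\mc P(H)$. First I would treat the parity constraints. Writing $\v a_{(j,S)}$ for the normal of the constraint $(j,S)$ -- the vector with entry $-1$ on $S$, $+1$ on $N(\v h_j)\setminus S$, and $0$ elsewhere -- a direct four-case check (according to whether $i \in S$ and whether $c_i = 1$) shows that
\[
\phi_{\v c}(\v a_{(j,S)}) = \v a_{(j,S^{\v c})}, \qquad S^{\v c} = S \,\triangle\, (\supp(\v c)\cap N(\v h_j)).
\]
By Proposition~\ref{prop:activeOrbit}, $(j,S)$ is active at $\vomega$ if and only if $(j,S^{\v c})$ is active at $\vomega^{\v c}$, so $\phi_{\v c}$ sends active parity normals at $\vomega$ bijectively to active parity normals at $\vomega^{\v c}$ (bijectivity because $S \mapsto S^{\v c}$ is an involution). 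Next I would dispatch the box constraints $x_i \ge 0$ and $-x_i \ge -1$, whose normals are $\v e_i$ and $-\v e_i$. Using $(\vomega^{\v c})_i = |\omega_i - c_i|$, one checks that when $c_i = 0$ activity is preserved and $\phi_{\v c}(\pm\v e_i) = \pm\v e_i$, while when $c_i = 1$ the two box constraints swap roles and $\phi_{\v c}(\pm\v e_i) = \mp\v e_i$; in either case the binding box normals at $\vomega$ map bijectively onto those at $\vomega^{\v c}$. Since $\phi_{\v c}$ is linear it sends conic combinations to conic combinations, so it maps the cone generated by the binding normals at $\vomega$ onto the cone generated by the binding normals at $\vomega^{\v c}$; that is, $\phi_{\v c}(\mc K_\vomega) = \mc K_{\vomega^{\v c}}$.

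For the approximation cone the computation is shorter. The key identity is
\[
\vpsi_i^{\v c} - \vomega^{\v c} = \phi_{\v c}(\vpsi_i - \vomega),
\]
which I would verify coordinate-wise: for $c_i = 0$ both relative coordinates equal the original ones, while for $c_i = 1$ each relative coordinate is $1$ minus the original, so the difference picks up exactly the sign $(-1)^{c_i}$. Thus the rows of $R_{\vomega^{\v c}, \Psi^{\v c}}$ are the images under $\phi_{\v c}$ of the rows of $R_{\vomega, \Psi}$. Writing $\phi_{\v c}$ also for its matrix and using orthogonality $\phi_{\v c}^T\phi_{\v c} = I$, we get for every $\v x$ and every $i$
\[
(\vpsi_i^{\v c} - \vomega^{\v c})^T \phi_{\v c}(\v x) = (\vpsi_i - \vomega)^T \phi_{\v c}^T \phi_{\v c}\, \v x = (\vpsi_i - \vomega)^T \v x.
\]
Hence $\v x \in \mc R_{\vomega,\Psi}$ if and only if $\phi_{\v c}(\v x) \in \mc R_{\vomega^{\v c}, \Psi^{\v c}}$, giving $\phi_{\v c}(\mc R_{\vomega,\Psi}) = \mc R_{\vomega^{\v c}, \Psi^{\v c}}$.

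I expect the main obstacle to be the bookkeeping in the recovery-cone step: one must verify that both the parity normals and the box normals transform correctly, and that the induced map on binding constraints is a genuine bijection rather than merely an injection into a possibly larger active set. The cleanest way around this is to lean on Proposition~\ref{prop:activeOrbit} together with the involutivity of $\v c$-relativization (so that applying the construction twice returns the original data), which simultaneously supplies surjectivity. The approximation-cone half, by contrast, is essentially the one-line identity above combined with orthogonality, and should present no difficulty.
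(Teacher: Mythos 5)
Your proof is correct and follows essentially the same route as the paper, which obtains Proposition~\ref{prop:tSymmetry} from Proposition~\ref{prop:activeOrbit} together with the observation that the constraint normals active at $\vomega^{\v c}$ are exactly the sign-flipped (on $\supp(\v c)$) normals active at $\vomega$, and likewise for the rows $(\vpsi_i - \vomega)^T$ of the approximation-cone matrix. The only difference is that you supply details the paper leaves implicit -- the four-case check $\phi_{\v c}(\v a_{(j,S)}) = \v a_{(j,S^{\v c})}$, the separate treatment of the box constraints (which Proposition~\ref{prop:activeOrbit} as stated does not cover), and the involutivity argument that upgrades injectivity to a bijection of active sets -- all of which are correct.
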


Since the probability density function for additive white Gaussian noise with a given power spectral density is symmetric about any such isometry defined in Proposition~\ref{prop:tSymmetry}, we have the following result.

\begin{theorem}\label{thm:transmit}
Let $\mc C$ be a code presented by the parity-check matrix $H$, let $\v c$ be a codeword and $\vomega$ be a linear programming pseudocodeword, and define the map $\phi_{\v c}: \bbr^n\to \bbr^n$ by $x_i \to (-1)^{c_i} x_i$ for all $i = 1, 2, \dots, n$.  If $t(\vomega)$ is a vector in the recovery cone $\mc K_\vomega$ suitable for transmitting $\vomega$ over the additive white Gaussian noise channel, then $\phi_{\v c}\bigg(t(\vomega)\bigg)$ is a vector in $\mc K_{\vomega^{\v c}}$ suitable for transmitting $\vomega^{\v c}$. Moreover, the probability of correctly decoding to $\vomega$ when $t(\vomega)$ is transmitted is equal to the probability of correctly decoding to $\vomega^{\v c}$ when $\phi_{\v c}\bigg(t(\vomega)\bigg)$ is transmitted.
\end{theorem}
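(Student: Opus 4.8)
The plan is to split Theorem~\ref{thm:transmit} into its two assertions — membership in the recovery cone and equality of decoding probabilities — and to dispatch each by leaning on the isometry $\phi_{\v c}$ furnished by Proposition~\ref{prop:tSymmetry}. The essential observation throughout is that $\phi_{\v c}$ is a coordinatewise sign flip, hence a linear orthogonal \emph{involution} ($\phi_{\v c}^{-1} = \phi_{\v c}$), and that it carries $\mc K_\vomega$ bijectively onto $\mc K_{\vomega^{\v c}}$.

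First I would handle the claim that $\phi_{\v c}(t(\vomega))$ is a suitable transmission vector. By Proposition~\ref{prop:tSymmetry}, $\phi_{\v c}$ is an isometry carrying the recovery cone $\mc K_\vomega$ onto $\mc K_{\vomega^{\v c}}$; since $t(\vomega) \in \mc K_\vomega$ by hypothesis, we immediately obtain $\phi_{\v c}(t(\vomega)) \in \mc K_{\vomega^{\v c}}$. By the discussion following the definition of the recovery cone, any vector lying in $\mc K_{\vomega^{\v c}}$ is a conic combination of the constraint vectors of $\mc P(H)$ active at $\vomega^{\v c}$, and so by Proposition~\ref{prop:coniccombo} it is a cost vector for which $\vomega^{\v c}$ is the optimal output of the LP decoder — i.e.\ it is suitable for reliably transmitting $\vomega^{\v c}$.

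The substantive part is the equality of decoding probabilities. I would first express each probability via the correct-recovery criterion stated immediately after Proposition~\ref{prop:coniccombo}: writing $\veta$ for the channel noise, the probability of correctly decoding to $\vomega$ when $t(\vomega)$ is sent equals $\P[\, t(\vomega) + \veta \in \mc K_\vomega\,]$, while the probability of correctly decoding to $\vomega^{\v c}$ when $\phi_{\v c}(t(\vomega))$ is sent equals $\P[\, \phi_{\v c}(t(\vomega)) + \veta \in \mc K_{\vomega^{\v c}}\,]$. Applying the involution $\phi_{\v c}$ to the event defining the second probability and using both its linearity and $\phi_{\v c}(\mc K_{\vomega^{\v c}}) = \mc K_\vomega$, the condition $\phi_{\v c}(t(\vomega)) + \veta \in \mc K_{\vomega^{\v c}}$ is equivalent to $t(\vomega) + \phi_{\v c}(\veta) \in \mc K_\vomega$; hence that probability equals $\P[\, t(\vomega) + \phi_{\v c}(\veta) \in \mc K_\vomega\,]$.

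It then remains to verify that $\phi_{\v c}(\veta)$ has the same law as $\veta$, which is the step I expect to require the most care even though it is ultimately routine. The AWGN vector $\veta$ has independent coordinates, each symmetric about zero, so negating the coordinates indexed by $\supp(\v c)$ leaves the joint distribution unchanged — this is precisely the symmetry invoked in the remark preceding the theorem. Consequently $\P[\, t(\vomega) + \phi_{\v c}(\veta) \in \mc K_\vomega\,] = \P[\, t(\vomega) + \veta \in \mc K_\vomega\,]$, so the two decoding probabilities coincide and the proof is complete.
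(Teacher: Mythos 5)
Your proof is correct and takes essentially the same route as the paper: the paper offers no separate proof of Theorem~\ref{thm:transmit}, deriving it instead from the single observation (stated immediately before the theorem) that the AWGN density is symmetric under the isometry $\phi_{\v c}$ of Proposition~\ref{prop:tSymmetry}, which is precisely the argument you give. Your write-up simply makes explicit what the paper leaves implicit -- the membership claim via Proposition~\ref{prop:coniccombo}, the change of variables using that $\phi_{\v c}$ is a linear involution carrying $\mc K_{\vomega^{\v c}}$ onto $\mc K_\vomega$, and the sign-flip invariance of the noise law.
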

Theorem~\ref{thm:transmit} has a practical consequence:  once we have found a vector that provides a low probability of error in recovering the pseudocodeword $\vomega$, we can simply apply the map $\phi_{\v c}$ to this vector to transmit another pseudocodeword in $\vomega$'s $\mc C$-symmetric orbit.

\section{A Detailed Example}\label{sec:example}

In this section we provide a detailed example of a code, the orbits of its pseudocodewords, the modulated vectors used to transmit these pseudocodewords, and the probability of block error when using said vectors.  We use a \emph{cycle code} because a complete characterization of their linear programming pseudocodewords is given in~\cite{axvigDreher}, which allows for easy access to a pool of known pseudocodewords.

\begin{defn}\label{defn:cycleCode}
A \emph{cycle code} is a binary linear code $\mc C$ equipped with a parity-check matrix $H$ with a uniform column weight of 2.
\end{defn}

Given a parity-check matrix $H$ of a cycle code, one can view $H$ as the vertex-edge incidence matrix of a graph, the so-called \emph{normal graph} $N$ of the code.  It is an easy exercise to show that a binary vector $\v c$ is in the null space of $H$ if and only if the support of $\v c$ indexes an edge-disjoint union of cycles in $N$.  For our example, we will consider the cycle code $\mc C_1$ of length 16 and dimension 5 presented as the null space of the parity-check matrix
\[
\setcounter{MaxMatrixCols}{20}
H_1 = \footnotesize{\begin{bmatrix}
1 & 0 & 0 & 1 & 0 & 0 & 0 & 0 & 0 & 0 & 0 & 0 & 0 & 0 & 0 & 0 \\
1 & 1 & 0 & 0 & 1 & 0 & 0 & 0 & 0 & 0 & 0 & 0 & 0 & 0 & 0 & 0 \\
0 & 1 & 1 & 0 & 0 & 1 & 0 & 0 & 0 & 0 & 0 & 0 & 0 & 0 & 0 & 0 \\
0 & 0 & 1 & 0 & 0 & 0 & 1 & 0 & 0 & 0 & 0 & 0 & 0 & 0 & 0 & 0 \\
0 & 0 & 0 & 1 & 1 & 0 & 0 & 1 & 0 & 0 & 0 & 0 & 0 & 0 & 0 & 0 \\
0 & 0 & 0 & 0 & 0 & 1 & 1 & 0 & 1 & 0 & 0 & 0 & 0 & 0 & 0 & 0 \\
0 & 0 & 0 & 0 & 0 & 0 & 0 & 1 & 0 & 1 & 1 & 0 & 0 & 0 & 0 & 0 \\
0 & 0 & 0 & 0 & 0 & 0 & 0 & 0 & 1 & 0 & 0 & 1 & 1 & 0 & 0 & 0 \\
0 & 0 & 0 & 0 & 0 & 0 & 0 & 0 & 0 & 1 & 0 & 0 & 0 & 1 & 0 & 0 \\
0 & 0 & 0 & 0 & 0 & 0 & 0 & 0 & 0 & 0 & 1 & 0 & 0 & 1 & 1 & 0 \\
0 & 0 & 0 & 0 & 0 & 0 & 0 & 0 & 0 & 0 & 0 & 1 & 0 & 0 & 1 & 1 \\
0 & 0 & 0 & 0 & 0 & 0 & 0 & 0 & 0 & 0 & 0 & 0 & 1 & 0 & 0 & 1 \\
\end{bmatrix}}.
\]
When viewed as the vertex-edge incidence matrix for a graph $N_1$ with vertices $u_1, u_2, \dots, u_{12}$ and edges $x_1, x_2, \dots, x_{16}$, we obtain the normal graph $N_1$ in Figure~\ref{fig:normalGraph}.
\begin{figure}
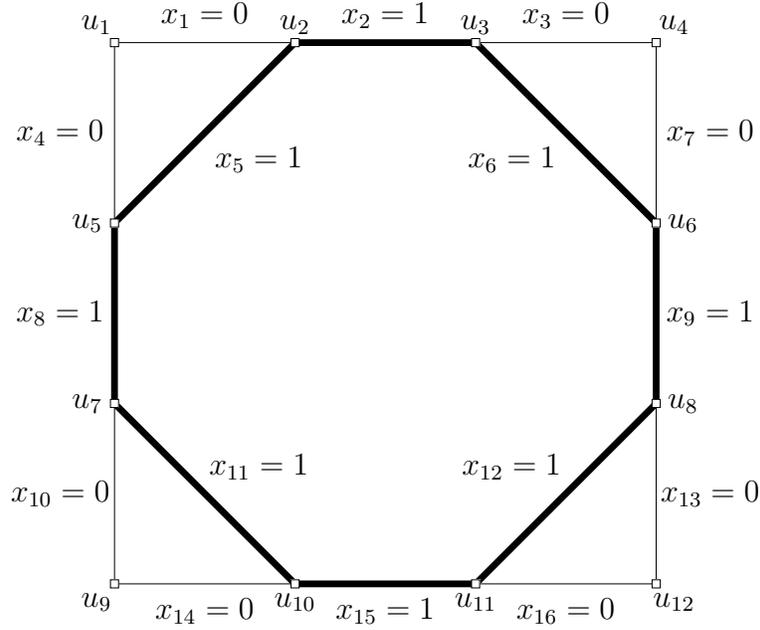

\centering
$\begin{graph}(8.5,8)
\unitlength=.6cm

\squarenode{f1}(1,13)[\graphnodecolour(1,1,1)]
\squarenode{f2}(5,13)[\graphnodecolour(1,1,1)]
\squarenode{f3}(9,13)[\graphnodecolour(1,1,1)]
\squarenode{f4}(13,13)[\graphnodecolour(1,1,1)]
\squarenode{f5}(1,9)[\graphnodecolour(1,1,1)]
\squarenode{f6}(13,9)[\graphnodecolour(1,1,1)]
\squarenode{f7}(1,5)[\graphnodecolour(1,1,1)]
\squarenode{f8}(13,5)[\graphnodecolour(1,1,1)]
\squarenode{f9}(1,1)[\graphnodecolour(1,1,1)]
\squarenode{f10}(5,1)[\graphnodecolour(1,1,1)]
\squarenode{f11}(9,1)[\graphnodecolour(1,1,1)]
\squarenode{f12}(13,1)[\graphnodecolour(1,1,1)]


\roundnode{x1}(3,13)[\graphnodesize{0}]
\roundnode{x2}(7,13)[\graphnodesize{0}]
\roundnode{x3}(11,13)[\graphnodesize{0}]
\roundnode{x4}(1,11)[\graphnodesize{0}]
\roundnode{x5}(3,11)[\graphnodesize{0}]
\roundnode{x6}(11,11)[\graphnodesize{0}]
\roundnode{x7}(13,11)[\graphnodesize{0}]
\roundnode{x8}(1,7)[\graphnodesize{0}]
\roundnode{x9}(13,7)[\graphnodesize{0}]
\roundnode{x10}(1,3)[\graphnodesize{0}]
\roundnode{x11}(3,3)[\graphnodesize{0}]
\roundnode{x12}(11,3)[\graphnodesize{0}]
\roundnode{x13}(13,3)[\graphnodesize{0}]
\roundnode{x14}(3,1)[\graphnodesize{0}]
\roundnode{x15}(7,1)[\graphnodesize{0}]
\roundnode{x16}(11,1)[\graphnodesize{0}]

\edge{f1}{f2}
\edge{f1}{f5}
\edge{f2}{f3}[\graphlinewidth{0.15}]
\edge{f2}{f5}[\graphlinewidth{0.15}]
\edge{f3}{f4}
\edge{f3}{f6}[\graphlinewidth{0.15}]
\edge{f4}{f6}
\edge{f5}{f7}[\graphlinewidth{0.15}]
\edge{f6}{f8}[\graphlinewidth{0.15}]
\edge{f7}{f9}
\edge{f7}{f10}[\graphlinewidth{0.15}]
\edge{f8}{f11}[\graphlinewidth{0.15}]
\edge{f8}{f12}
\edge{f9}{f10}
\edge{f10}{f11}[\graphlinewidth{0.15}]
\edge{f11}{f12}

\nodetext{x1}(0,.6){$x_1 = 0$}
\nodetext{x2}(0,.6){$x_2 = 1$}
\nodetext{x3}(0,.6){$x_3 = 0$}
\nodetext{x4}(-1.2,0){$x_4 = 0$}
\nodetext{x5}(1.2,-.6){$x_5 = 1$}
\nodetext{x6}(-1.2,-.6){$x_6 = 1$}
\nodetext{x7}(1.2,0){$x_7 = 0$}
\nodetext{x8}(-1.2,0){$x_8 = 1$}
\nodetext{x9}(1.2,0){$x_9 = 1$}
\nodetext{x10}(-1.2,0){$x_{10} = 0$}
\nodetext{x11}(1.2,.6){$x_{11} = 1$}
\nodetext{x12}(-1.2,.6){$x_{12} = 1$}
\nodetext{x13}(1.2,0){$x_{13} = 0$}
\nodetext{x14}(0,-.6){$x_{14} = 0$}
\nodetext{x15}(0,-.6){$x_{15} = 1$}
\nodetext{x16}(0,-.6){$x_{16} = 0$}

\nodetext{f1}(-.4,.4){$u_1$}
\nodetext{f2}(0,.4){$u_2$}
\nodetext{f3}(0,.4){$u_3$}
\nodetext{f4}(.4,.4){$u_4$}
\nodetext{f5}(-.6,0){$u_5$}
\nodetext{f6}(.6,0){$u_6$}
\nodetext{f7}(-.6,0){$u_7$}
\nodetext{f8}(.6,0){$u_8$}
\nodetext{f9}(-.4,-.4){$u_9$}
\nodetext{f10}(0,-.4){$u_{10}$}
\nodetext{f11}(0,-.4){$u_{11}$}
\nodetext{f12}(.4,-.4){$u_{12}$}

\end{graph}
$
\caption{The normal graph $N_1$ corresponding to the parity-check matrix $H_1$.  The edges corresponding to the support of the codeword $\v c = [0, 1, 0, 0, 1, 1, 0, 1, 1, 0, 1, 1, 0, 0, 1, 0]^T$ are given in bold to emphasize the correspondence between edge-disjoint unions of cycles in the normal graph and codewords.}
\label{fig:normalGraph}
\end{figure}

As mentioned before, cycle codes derive their name from the bijective correspondence between binary codewords and edge-disjoint unions of cycles in the normal graph.  In~\cite{axvigDreher}, a graph-based characterization of both trivial and nontrivial LP pseudocodewords for cycle codes is given.

\begin{theorem}[\cite{axvigDreher}]\label{thm:extremepointschar}
Let $C$ be a cycle code of length $n$ presented by the parity-check matrix $H$, and let $\mc P$ and $N$ denote its fundamental polytope and normal graph, respectively. A vector $\v x \in \mc P$ is a linear programming pseudocodeword of $\mc P$ if and only if the following three conditions hold:
\begin{enumerate}[(a)]
\item $\v x \in \{0,\frac 12, 1\}^{n}$.

\item  With $\mc H_{\v x}$ defined to be $\{ i \, | \, x_i = \frac 12 \}$, the subgraph $\Gamma$ of $N$ induced by the edges corresponding to the indices in $\mc H_{\v x}$ is 2-regular.  Equivalently, $\Gamma$ is a union of vertex-disjoint simple cycles $\gamma_1, \gamma_2, \dots, \gamma_\ell$.

\item With $U^{\mathrm{o}}_{\gamma_i}$ defined as set of vertices in $\gamma_i$ that are incident with an odd number of edges assigned a value of 1 by $\v x$, the cardinality of $U^{\mathrm{o}}_{\gamma_i}$ is itself odd for all simple cycles $\gamma_i$ in $\Gamma$.

\end{enumerate}
\end{theorem}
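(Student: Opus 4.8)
The plan is to use the standard vertex criterion from Section~\ref{subsec:backgroundOnLinearProgramming}: a feasible point $\v x\in\mc P$ is an extreme point (an LP pseudocodeword) if and only if there is no nonzero direction $\v d$ with $\v x\pm\epsilon\v d\in\mc P$ for small $\epsilon>0$, equivalently if and only if the gradients of the constraints active at $\v x$ span $\bbr^n$. Since $\v x\in\mc P$ is assumed, I only need to decide \emph{which feasible points are rigid}, not re-prove feasibility. The column-weight-$2$ hypothesis is what makes this tractable: each coordinate $x_i$ is the value on an edge $e_i$ of the normal graph $N$ and therefore appears in exactly the two parity checks indexed by its endpoints. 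Rewriting the constraint $(j,S)$ as $\sum_{i\in S}(1-x_i)+\sum_{i'\in N(\v h_j)\setminus S}x_{i'}\ge1$, I first record the local dictionary at a vertex $u_j$: the constraint $(j,S)$ is active exactly when $S$ attains $\min_{\text{odd }S}\big[\sum_{i\in S}(1-x_i)+\sum_{i'\notin S}x_{i'}\big]=1$, the gradient of an active constraint is $+1$ on $S$, $-1$ on $N(\v h_j)\setminus S$, and $0$ elsewhere, and a box constraint pins every integer-valued coordinate. A direction $\v d$ is thus feasible if and only if it is supported on the interior coordinates and is orthogonal to every active parity gradient.

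Granting half-integrality (condition (a)), I would obtain (b) and (c) by a single local computation. Directions are now supported on the half-integral edges $\mc H_{\v x}$. A short feasibility check shows a vertex of $N$ incident to exactly one half-integral edge forces the local minimum below $1$ and is hence infeasible, while a vertex incident to three or more half-integral edges has local minimum strictly above $1$, carries no active constraint, and so leaves those edges free to move; rigidity thus forces every vertex to meet $\mc H_{\v x}$ in $0$ or $2$ edges, which is precisely the $2$-regularity in (b). On a cycle $\gamma$, each degree-$2$ vertex $v$ has exactly two active constraints whose gradients, restricted to its two half-integral edges $a,b$, are $(+1,-1),(-1,+1)$ when $v$ meets an even number of $1$-valued edges and $(+1,+1),(-1,-1)$ when it meets an odd number; orthogonality therefore forces $d_b=s_v\,d_a$ with $s_v=(-1)^{p_v}$, where $p_v$ counts the incident $1$-valued edges. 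Composing these links around $\gamma$ gives $d_{a}=\big(\prod_{v\in\gamma}s_v\big)d_{a}=(-1)^{|U^{\mathrm o}_{\gamma}|}d_{a}$, so the cycle admits a nonzero direction iff $|U^{\mathrm o}_{\gamma}|$ is even. Hence $\v x$ is rigid iff every cycle has $|U^{\mathrm o}_\gamma|$ odd, which is (c); this same computation, read backwards, proves the sufficiency of (a)--(c).

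The crux is condition (a): an extreme point has no genuinely fractional coordinate (value in $(0,1)\setminus\{1/2\}$). My plan is to follow such an edge through the normal graph. At any degree-$2$ vertex the two incident size-one constraints read $x_i\le x_{i'}$ and $x_{i'}\le x_i$, so a fractional edge forces its degree-$2$ neighbor to carry the \emph{same} fractional value (and rules out a half-integral or integral neighbor there); chasing this, a fractional edge either closes up into a cycle of equal-valued fractional edges through degree-$2$ vertices or reaches a vertex of degree at least $3$. In the first case the all-ones direction along the cycle is orthogonal to every active gradient $(+1,-1)$ and hence feasible, exhibiting a nonzero $\v d$; in the second case the extra incident edges create local slack (or additional interior coordinates) that again free a direction. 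Either way $\v x$ fails to be rigid, so no fractional coordinate can occur.

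I expect the degree-$\geq 3$ bookkeeping in (a) to be the main obstacle, precisely because \emph{tied} fractional values defeat the naive ``round-to-nearest'' description of the active set: when two incident edges share a common value $m$, swapping them between $S$ and its complement changes the local objective by $2(m-m)=0$, so both swapped sets stay active and the sign of a fractional coordinate is no longer determined by the constraint alone. This is the same phenomenon that makes the equal-valued cycle above perturbable, and it must be tracked carefully. The clean way to package it is to observe that, restricted to the interior coordinates, the active-gradient matrix has at most two nonzero entries per column and is thus a signed incidence matrix of an auxiliary graph, whose column rank falls short of the number of interior coordinates exactly when a genuinely fractional edge is present. Establishing that rank deficiency---uniformly over the tie structure---is the real work; once it is in hand, (a) follows and the local computation above delivers (b) and (c).
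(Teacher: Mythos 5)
The first thing to say is that the paper contains no proof of this theorem to compare against: it is imported verbatim from~\cite{axvigDreher}, so your attempt can only be judged on its own merits. On those merits, the scaffolding is right and roughly half the theorem is genuinely proved. The extreme-point criterion (a feasible $\v x$ is a vertex iff no nonzero $\v d$ has $\v x \pm \epsilon \v d \in \mc P$, iff the active gradients span $\bbr^n$), the rewriting of $(j,S)$ as $\sum_{i \in S}(1-x_i) + \sum_{i' \notin S} x_{i'} \geq 1$, the observation that a check meeting exactly one half-valued edge is infeasible while a check meeting three or more half-valued edges has every parity constraint slack, and the sign-propagation rule $d_b = (-1)^{p_v} d_a$ around a cycle $\gamma$, which yields a nonzero direction precisely when $|U^{\mathrm{o}}_\gamma|$ is even --- all of this is correct, and it delivers (b) and (c) \emph{conditional on} (a). (The ``three or more half-edges'' branch needs one more sentence: a thread of half-edges running between degree-$\geq 3$ vertices of $\Gamma$ supports a nonzero direction because the propagation relations along it have no closure condition at either end; this is easily supplied.)

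The genuine gap is condition (a), and you concede it yourself: ``establishing that rank deficiency \dots is the real work; once it is in hand, (a) follows.'' Deferring the hardest step leaves a plan, not a proof, and the two mechanisms you sketch for (a) do not survive scrutiny. First, the dichotomy ``a fractional edge either closes into an equal-valued cycle through degree-$2$ checks or reaches a degree-$\geq 3$ check, where the extra edges create local slack'' fails in its second branch: a weight-$3$ check with incident values $(0.3, 0.3, 0.6)$ is feasible and carries the \emph{active} constraint $d_1 + d_2 - d_3 = 0$ (take $S$ to be the $0.6$-edge), so genuinely fractional edges at high-degree checks are neither locally pinned nor locally free; ruling them out at an extreme point is an irreducibly global question, which is exactly what (a) asserts. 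Second, your structural claim that the active-gradient matrix restricted to the interior coordinates ``has at most two nonzero entries per column'' is false: at a degree-$2$ check both singleton constraints are active at \emph{every} feasible point, so an edge joining two degree-$2$ checks already has four active constraints through it --- precisely the tie phenomenon you flag --- and the signed-incidence-matrix reduction you want is unavailable without the bookkeeping you postpone. What is missing is an Edmonds--Johnson-style half-integrality argument: from any feasible $\v x$ with a coordinate in $(0,1)\setminus\{\tfrac{1}{2}\}$, extract from the fractional support an alternating cycle or path structure and exhibit a perturbation direction annihilated by \emph{all} active gradients, uniformly over the tie structure. That construction is the actual content of part (a), and it is absent here.
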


Using Theorem~\ref{thm:extremepointschar}, one can see that the vector $\vomega_1$ given by
\[
\vomega_1 = \begin{bmatrix}\frac 12 & 1 & 1 & \frac 12 & \frac 12 & 0 & 1 & 0 & 1 & 0 & 0 & \frac 12 & \frac 12 & 0 & 0 & \frac 12 \end{bmatrix}^T
\]
must be a linear programming pseudocodeword - Figure~\ref{fig:pcw1} gives a graphical representation.  We note that, loosely speaking, $\vomega_1$ consists of a pair of cycles along with a path linking these cycles.  The edges on the cycles are all labeled with $\frac{1}{2}$, and those on the path between the cycles carry a value of 1.  A similar structure holds for all other pseudocodewords of cycle codes, and this structure is exploited by the \emph{cycle-path method} of~\cite{axvigDreher}, allowing for the explicit construction of LP pseudocodewords for cycle codes.

\begin{figure}
\centering
$\begin{graph}(8.5,8)
\unitlength=.6cm

\squarenode{f1}(1,13)[\graphnodecolour(1,1,1)]
\squarenode{f2}(5,13)[\graphnodecolour(1,1,1)]
\squarenode{f3}(9,13)[\graphnodecolour(1,1,1)]
\squarenode{f4}(13,13)[\graphnodecolour(1,1,1)]
\squarenode{f5}(1,9)[\graphnodecolour(1,1,1)]
\squarenode{f6}(13,9)[\graphnodecolour(1,1,1)]
\squarenode{f7}(1,5)[\graphnodecolour(1,1,1)]
\squarenode{f8}(13,5)[\graphnodecolour(1,1,1)]
\squarenode{f9}(1,1)[\graphnodecolour(1,1,1)]
\squarenode{f10}(5,1)[\graphnodecolour(1,1,1)]
\squarenode{f11}(9,1)[\graphnodecolour(1,1,1)]
\squarenode{f12}(13,1)[\graphnodecolour(1,1,1)]


\roundnode{x1}(3,13)[\graphnodesize{0}]
\roundnode{x2}(7,13)[\graphnodesize{0}]
\roundnode{x3}(11,13)[\graphnodesize{0}]
\roundnode{x4}(1,11)[\graphnodesize{0}]
\roundnode{x5}(3,11)[\graphnodesize{0}]
\roundnode{x6}(11,11)[\graphnodesize{0}]
\roundnode{x7}(13,11)[\graphnodesize{0}]
\roundnode{x8}(1,7)[\graphnodesize{0}]
\roundnode{x9}(13,7)[\graphnodesize{0}]
\roundnode{x10}(1,3)[\graphnodesize{0}]
\roundnode{x11}(3,3)[\graphnodesize{0}]
\roundnode{x12}(11,3)[\graphnodesize{0}]
\roundnode{x13}(13,3)[\graphnodesize{0}]
\roundnode{x14}(3,1)[\graphnodesize{0}]
\roundnode{x15}(7,1)[\graphnodesize{0}]
\roundnode{x16}(11,1)[\graphnodesize{0}]

\edge{f1}{f2}[\graphlinewidth{0.15}\graphlinedash{2.5}]
\edge{f1}{f5}[\graphlinewidth{0.15}\graphlinedash{2.5}]
\edge{f2}{f3}[\graphlinewidth{0.15}]
\edge{f2}{f5}[\graphlinewidth{0.15}\graphlinedash{2.5}]
\edge{f3}{f4}[\graphlinewidth{0.15}]
\edge{f3}{f6}
\edge{f4}{f6}[\graphlinewidth{0.15}]
\edge{f5}{f7}
\edge{f6}{f8}[\graphlinewidth{0.15}]
\edge{f7}{f9}
\edge{f7}{f10}
\edge{f8}{f11}[\graphlinewidth{0.15}\graphlinedash{2.5}]
\edge{f8}{f12}[\graphlinewidth{0.15}\graphlinedash{2.5}]
\edge{f9}{f10}
\edge{f10}{f11}
\edge{f11}{f12}[\graphlinewidth{0.15}\graphlinedash{2.5}]

\nodetext{x1}(0,.6){$x_1 = \frac 12$}
\nodetext{x2}(0,.6){$x_2 = 1$}
\nodetext{x3}(0,.6){$x_3 = 1$}
\nodetext{x4}(-1.2,0){$x_4 = \frac 12$}
\nodetext{x5}(1.2,-.6){$x_5 = \frac 12$}
\nodetext{x6}(-1.2,-.6){$x_6 = 0$}
\nodetext{x7}(1.2,0){$x_7 = 1$}
\nodetext{x8}(-1.2,0){$x_8 = 0$}
\nodetext{x9}(1.2,0){$x_9 = 1$}
\nodetext{x10}(-1.2,0){$x_{10} = 0$}
\nodetext{x11}(1.2,.6){$x_{11} = 0$}
\nodetext{x12}(-1.2,.6){$x_{12} = \frac 12$}
\nodetext{x13}(1.2,0){$x_{13} = \frac 12$}
\nodetext{x14}(0,-.6){$x_{14} = 0$}
\nodetext{x15}(0,-.6){$x_{15} = 0$}
\nodetext{x16}(0,-.6){$x_{16} = \frac 12$}

\nodetext{f1}(-.4,.4){$u_1$}
\nodetext{f2}(0,.4){$u_2$}
\nodetext{f3}(0,.4){$u_3$}
\nodetext{f4}(.4,.4){$u_4$}
\nodetext{f5}(-.6,0){$u_5$}
\nodetext{f6}(.6,0){$u_6$}
\nodetext{f7}(-.6,0){$u_7$}
\nodetext{f8}(.6,0){$u_8$}
\nodetext{f9}(-.4,-.4){$u_9$}
\nodetext{f10}(0,-.4){$u_{10}$}
\nodetext{f11}(0,-.4){$u_{11}$}
\nodetext{f12}(.4,-.4){$u_{12}$}

\end{graph}
$
\caption{A graphical representation of the pseudocodeword $\vomega_1$.}
\label{fig:pcw1}
\end{figure}

We now turn our attention to the task of modulating pseudocodewords; that is, mapping them to vectors suitable for transmission.  Unless stated otherwise the terms ``modulated vector'' and ``modulated pseudocodeword'' are taken to mean the center of an insphere for an appropriately chosen approximation cone.

The modulated vector $t(\vomega_1)$ suitable for transmitting $\vomega_1$ across the additive white Gaussian noise channel was found by a randomized, iterative procedure.  The set of pseudocodewords $\Psi_1$ was initialized as the empty set and the vector $\v f_1$ was set to be a conic combination of constraint vectors of $\mc P$ that were active at $\vomega_1$.  From here, each iteration consisted of the following steps.  Gaussian noise was repeatedly added to $\v f_{i}$ and the resulting vectors were fed into the LP decoder.  The pseudocodewords obtained as output vectors were recorded, and the unique polyhedral neighbors\footnote{The pseudocodeword $\vpsi$ is a polyhedral neighbor of $\vomega_1$ if both $\vpsi$ and $\vomega_1$ comprise the two end points of one of the fundamental polytope's edges.  This occurs precisely when the rank of constraint vectors that are active at \emph{both} vectors is one less than the dimension of the ambient space.  We restricted our sets $\Psi_i$ to such neighboring pseudocodewords since they are the most likely erroneous outputs of the LP decoder in low-noise scenarios.} of $\vomega_1$ were added to the set $\Psi_i$. Following the algorithm outlined in Section~\ref{sec:insphere}, the inradius of $\mc R_{\vomega_1, \Psi_i}$ was computed to within $10^{-5}$.  The conic combination $\v f_{i+1}$ of constraints used for the next iteration was then updated to be the center of the insphere of $\mc R_{\vomega_1, \Psi_i}$, and the set of pseudocodewords $\Psi_{i+1}$ was initialized at $\Psi_i$.  

Below is the center of the insphere for $\mc R_{\vomega_1, \Psi_{10}}$, recorded to two decimal places for ease of viewing:
\[
t(\vomega_1) =  \footnotesize{\begin{bmatrix}0 & -0.35&  -0.25&    0  &   0&  0.35&  -0.26&  0.35&  -0.35&  0.26&  0.35   &   0   &   0 & 0.26 & 0.35  &    0\end{bmatrix}}^T.
\]
The corresponding inradius was 0.49; however, we note that since this is the inradius of an approximation cone and not that of the true recovery cone it only serves to provide an upper bound on the true inradius.  We may therefore use such inradii only as rough gauges of error-correcting performance. For a concrete measure we rely on simulations (see Section~\ref{sec:performance}).

Following Section~\ref{sec:csym}, we can now apply $\mc C$-symmetry to $\vomega_1$ to obtain additional pseudocodewords as well as to $t(\vomega_1)$ to obtain modulated versions of these pseudocodewords.  Using $\mc C$-symmetry about the codeword $\v c$ of Figure~\ref{fig:normalGraph} we obtain
\[
\vomega_1^{\v c}  = \begin{bmatrix}\frac 12 & 0 & 1 & \frac 12 & \frac 12 & 1 & 1 & 1 & 0 & 0 & 1 & \frac 12 & \frac 12 & 0 & 1 & \frac 12 \end{bmatrix}^T
\]
and
\[
t(\vomega_1^{\v c})  = \footnotesize{\begin{bmatrix}0 & 0.35&  -0.25&    0  &   0&  -0.35&  -0.26&  -0.35&  0.35&  0.26&  -0.35   &   0   &   0 & 0.26 & -0.35  &    0\end{bmatrix}}^T.
\]
There are more pseudocodewords in the $\mc C$-symmetry orbit of $\vomega_1$; however, there are only $2^3 = 8$ in total.  This can be seen by observing that the operation of $\mc C$ symmetry will never affect any coordinate of a pseudocodeword which is assigned a value of $\frac 12$.  Since the locations where $\vomega_1$ is equal to $\frac 12$ consists of two vertex disjoint cycles in the normal graph we can conclude that $\vomega_1$'s stabilizer subgroup has size $2^2 = 4$ and thus its orbit has size $2^5/2^2 = 8$.  

Applying Theorem~\ref{thm:extremepointschar} and the cycle-path method of~\cite{axvigDreher}, we see that the linear programming pseudocodewords of this cycle code can be partitioned into 8 disjoint orbits under $\mc C$ symmetry:  7 orbits of nontrivial pseudocodewords and a separate orbit for the binary codewords.  A set of representatives for these orbits is given as follows:
\begin{align*}
\vomega_1 & = \begin{bmatrix} \frac{1}{2} &    1 &     1 &   \frac{1}{2} &  \frac{1}{2} &    0 &   1 &     0 &   1 &      0 &    0 &  \frac{1}{2} &   \frac{1}{2} &     0 &    0 &  \frac{1}{2}  \end{bmatrix}^T \\
\vomega_2 & = \begin{bmatrix} 1 &     1 &   \frac{1}{2} &    1 &     0 & \frac{1}{2} &  \frac{1}{2} &    1 &     0 &  \frac{1}{2} &   \frac{1}{2} &     0 &    0 &  \frac{1}{2} &     0 &    0 \end{bmatrix}^T \\
\vomega_3 & = \begin{bmatrix}  \frac{1}{2} &    1 &   \frac{1}{2} &  \frac{1}{2} &  \frac{1}{2} &  \frac{1}{2} &  \frac{1}{2} &    0 &   0 &    0 &    0 &    0 &    0 &    0 &    0 &    0 \end{bmatrix}^T \\
\vomega_4 & = \begin{bmatrix}  \frac{1}{2} &    0 &   0 & \frac{1}{2} &  \frac{1}{2} &    0 &   0 &   1 &     0 &  \frac{1}{2} &   \frac{1}{2} &     0 &    0 &  \frac{1}{2} &     0 &    0 \end{bmatrix}^T \\
\vomega_5 & = \begin{bmatrix}   0 &   0 & \frac{1}{2} &    0 &   0 & \frac{1}{2} &  \frac{1}{2} &    0 &   1 &      0 &    0 &  \frac{1}{2} &   \frac{1}{2} &     0 &    0 &  \frac{1}{2}  \end{bmatrix}^T \\
\vomega_6 & = \begin{bmatrix}  0 &   0 &   0 &   0 &   0 &   0 &   0 &   0 &   0 &  \frac{1}{2} &   \frac{1}{2} &   \frac{1}{2} &   \frac{1}{2} &   \frac{1}{2} &     1 &    \frac{1}{2}  \end{bmatrix}^T \\
\vomega_7 & = \begin{bmatrix}  \frac{1}{2} &    1 &   \frac{1}{2} &  \frac{1}{2} &  \frac{1}{2} &  \frac{1}{2} &  \frac{1}{2} &    0 &   0 &  \frac{1}{2} &   \frac{1}{2} &   \frac{1}{2} &   \frac{1}{2} &   \frac{1}{2} &     1 &    \frac{1}{2}  \end{bmatrix}^T \\
\vomega_8 & = \begin{bmatrix}    0 &   0 &   0 &   0 &   0 &   0 &   0 &   0 &   0 &    0 &    0 &    0 &    0 &    0 &    0 &    0 \end{bmatrix}^T.
\end{align*}
The sizes of these orbits are 8, 8, 8, 8, 8, 8, 2, and 32, respectively, making for a total of 82 pseudocodewords.  

Modulated versions of $\vomega_2, \vomega_3, \dots, \vomega_8$ were found in a fashion similar to that described above for $\vomega_1$.  The inradii of the corresponding approximation cones were 0.49, 0.43, 0.43, 0.43, 0.43, 0.32, and 0.50, respectively -- the apparent multiplicity of some of these inradii is due to the rotational symmetry of the normal graph.  It is interesting to note that the modulated version of $\vomega_8$, which is simply the all-zeros codeword, is not proportional to the vector of all ones, which is the vector used to transmit the all-zeros codeword under BPSK modulation.  To two decimal places, we have
\begin{align*}
t(\vomega_8) & = \footnotesize{\begin{bmatrix} 0.27 & 0.04 & 0.28 & 0.28 & 0.31 & 0.31 & 0.27 & 0.04 & 0.04 & 0.27 & 0.31 & 0.31 & 0.27 & 0.27 & 0.04 & 0.28\\ \end{bmatrix}}^T.
\end{align*}
With respect to the approximation cone used to produce $t(\vomega_8)$, the radius of the largest sphere centered at $t(\vomega_8)$ and contained within the approximation cone was 0.50.  The radius of the largest sphere centered at the unit vector in the direction of the BPSK vector and contained within the same approximation cone was only 0.43, which suggests that the use of $t(\vomega_8)$ will result in a lower block error rate than traditional BPSK modulation.  This suspicion is confirmed by the simulation results displayed in Figure~\ref{fig:stopSignSigma}.

\begin{figure}
\centering
\includegraphics[scale = .7]{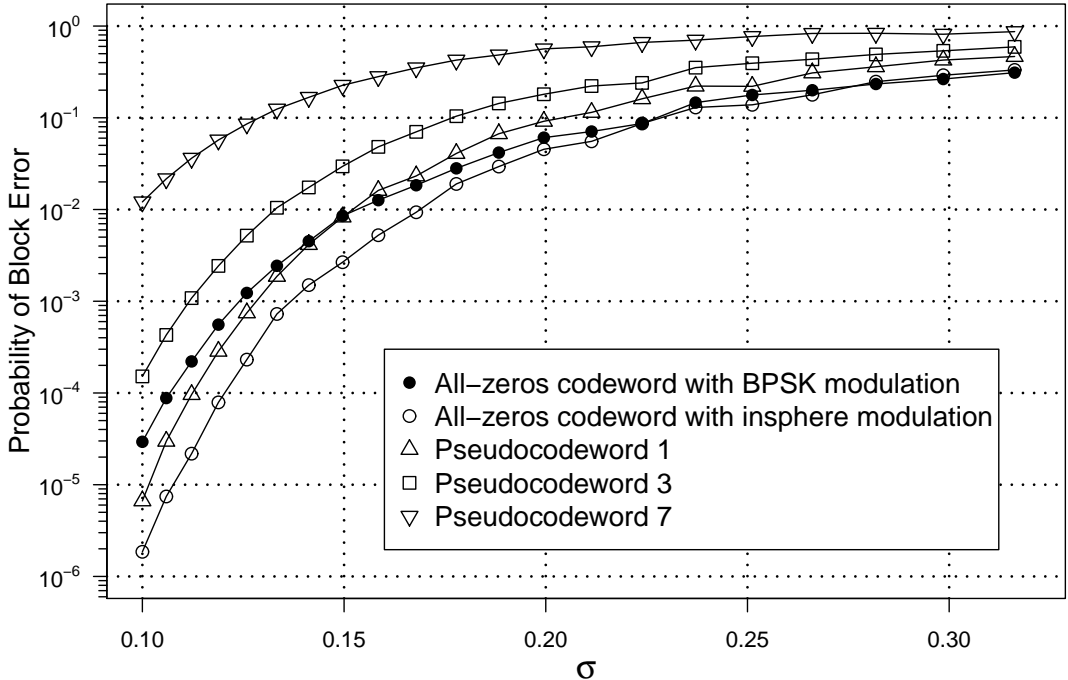}
\caption{A comparison of block error rates when transmitting the all-zeros codeword, $\vomega_1, \vomega_3$, and $\vomega_7$ with insphere modulation.  Also included is the block error rate of the all-zeros codeword under traditional BPSK modulation. The horizontal axis measures the standard deviation of the additive white Gaussian noise. }
\label{fig:stopSignSigma}
\end{figure}

Figure~\ref{fig:stopSignSigma} reveals some interesting facts.  Most basically, pseudocodewords can be reliably recovered when insphere modulation is used for transmission.  Second, pseudocodewords in different $\mc C$-symmetry orbits display a wide range of error-correcting ability:  $\vomega_7$ has terrible block error rates, while $\vomega_1$'s error rates are very close to the error rates provided by BPSK modulation for codewords.  Finally,  for low levels of noise we note that using insphere modulation for codewords results in slightly lower error rates than BPSK modulation.   As we will see in Section~\ref{sec:performance}, it is often the case where codewords themselves can benefit by exchanging BPSK modulation for insphere modulation.

\section{Performance Analysis}\label{sec:performance}

In Section~\ref{sec:csym} we saw how $\mc C$-symmetry can be leveraged to produce a codebook that is larger than the original binary code but still has the same length (i.e., the same number of real dimensions needed for its signal constellation), and a specific example was given in Section~\ref{sec:example} in which a cycle code of size $32$ could be extended to a codebook consisting of as many as 82 different messages.  This increase in spectral efficiency must be weighed against the potential loss of error-correcting performance.   For the simulations in this section, we are concerned only with the probability of a block error.  

We begin with some notation.  Let $\mc C$ be a code presented by a parity-check matrix $H$, and let $\mc O_1, \mc O_2, \dots, \mc O_m$ denote $m$ mutually disjoint orbits of LP pseudocodewords (trivial or nontrivial) under $\mc C$-symmetry. Suppose we wish to transmit information across the AWGN channel using the codebook $\mc M = \cup_{i = 1}^m \mc O_i$.  By Theorem~\ref{thm:transmit}, we may assume that for a fixed noise level the probability of block error will be uniform across each orbit $\mc O_i$ -- we denote this probability by $p_i$.  It is of the utmost importance to observe that $p_i$ is completely independent of the other orbits comprising the codebook $\mc M$.  Indeed, under LP decoding \emph{all} points in the fundamental polytope, and thus all LP pseudocodewords, are made available as potential explanations of the received vector regardless of whether these points were included in the codebook $\mc M$.  Because of this independence, if we assume a uniform distribution on the $M = \sum_{i=1}^m |\mc O_i|$ pseudocodewords from $\mc M$ we then have that the probability of block error is $p = \frac{1}{M} \sum_{i=1}^m |\mc O_i| p_i$.

This formula for the overall probability of block error gives great flexibility to the researcher when searching through possible codebooks.  First, we choose a set of linear programming pseudocodewords $\vomega_1, \dots, \vomega_m$ from distinct $\mc C$-symmetry orbits and then compute modulated versions for each (see Sections~\ref{sec:bestVector} and~\ref{sec:insphere}).  For each pseudocodeword chosen we simulate transmission across an additive white Gaussian noise channel, taking care to ensure that all modulated vectors have unit energy and that the same level of Gaussian noise is added to every pseudocodeword.  Once simulation results are obtained, we have approximations of all $p_i$'s.  From here we can chose any collection of the $m$ mutually disjoint orbits represented by $\vomega_1, \dots, \vomega_m$ to combine into an overall codebook whose probability of error is given above.  The corresponding signal-to-noise ratios are computed according to the formula $\frac{E_b}{N_0} = \frac{1}{2\log_{2}(M)\sigma^2}$, where $M$ is the size of the overall codebook and $\sigma^2$ the variance of the Gaussian noise added to the transmitted pseudocodewords.

\subsection{The cycle code $\mc C_1$}\label{subsec:c1}

For our first round of simulations we use the cycle code $\mc C_1$ from Section~\ref{sec:example}.  In Figure~\ref{fig:stopSignSim} we compare the performance of four codebook/modulation schemes.  Two of these use only the set of 32 codewords, but differ in that one utilizes BPSK modulation while the other employs insphere modulation.  The third scheme's codebook is comprised of all 32 codewords in addition to the 48 nontrivial pseudocodewords in the $\mc C$-symmetric orbits of $\vomega_1, \vomega_2, \dots, \vomega_6$, making for an overall codebook of size 80 (all vectors in this codebook were modulated using insphere modulation).  The final combination listed is the codebook consisting only of the 48 nontrivial pseudocodewords in the $\mc C$-symmetric orbits of $\vomega_1, \vomega_2, \dots, \vomega_6$ -- no binary codewords whatsoever\footnote{The two pseudocodewords in the orbit of $\vomega_7$ were not used due to their relatively poor block error rates (see Figure~\ref{fig:stopSignSigma}).}.

\begin{figure}
\centering
\includegraphics[scale = .7]{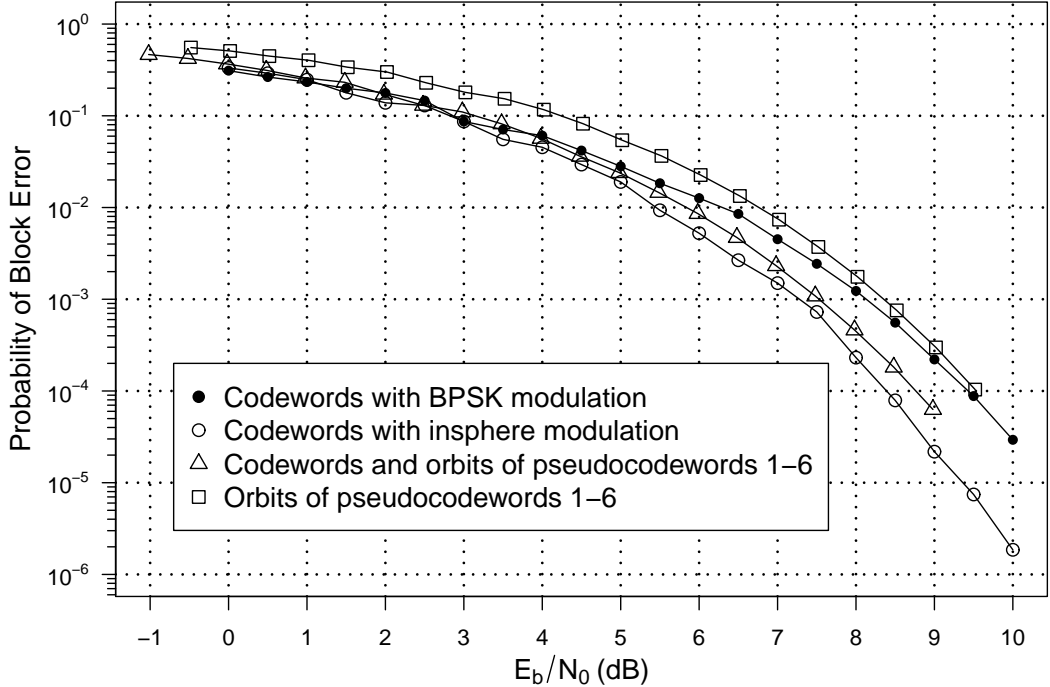}
\caption{A comparison of block error rates for four codebook/modulation pairs, all of which were derived from the cycle code $\mc C_1$ of Section~\ref{sec:example}.}
\label{fig:stopSignSim}
\end{figure}

We can make several observations from Figure~\ref{fig:stopSignSim}.  First, we see that insphere modulation enjoys a modest coding gain over BPSK modulation when it comes to transmitting codewords.  Second, we see that when pseudocodewords are broadcast in conjunction with codewords, it is possible to attain error rates better than those of codewords alone under BPSK modulation.  Moreover, these better error rates are coupled with a higher spectral efficiency:  the coding scheme consisting of codewords and the  $\mc C$-symmetric orbits of  $\vomega_1, \vomega_2, \dots, \vomega_6$ has a spectral efficiency of $\log_{2}(80)/16 = 0.3951$, whereas the spectral efficiency of the codebooks using only binary codewords is 0.3125.  The spectral efficiency of the codebook consisting only of the union of the $\mc C$-symmetry orbits of $\vomega_1, \vomega_2, \dots, \vomega_6$ is $\log_2(48)/16 = .3491$, thus landing this codebook in a bit of gray area:  when compared to codewords under BPSK modulation, it is slightly better with respect to spectral efficiency yet slightly worse with respect to block error rate.

\subsection{The cycle code $\mc C_2$}\label{subsec:c2}

A second [20, 7] cycle code $\mc C_2$ was also investigated.  The parity-check matrix chosen for $\mc C_2$ corresponds to the normal graph shown in Figure~\ref{fig:c2normalGraph}.  Using Theorem~\ref{thm:extremepointschar} and the cycle-path method of~\cite{axvigDreher} one can show that this code possesses a total of 2,376 pseudocodewords:  128 binary codewords, $\binom{8}{4} = 70$ disjoint $\mc C$-symmetry orbits consisting of $2^5 = 32$ nontrivial pseudocodewords each, and a final orbit consisting of $2^3 = 8$ nontrivial pseudocodewords.  It is therefore possible to construct a coding scheme from this code that possesses a spectral efficiency of $\log_2(2376)/20 = 0.5607$, a fair bit higher than 0.35, the spectral efficiency of the original binary code.  Many of these non-trivial pseudocodewords, however, have approximation cones with small inradii, which would drive the block error rates of a coding scheme using all 2,376 pseudocodewords to unacceptably high levels.  The task at hand, and indeed for any coding scheme we try to extend by using LP pseudocodewords, is to determine which of these non-trivial pseudocodewords enjoy a performance advantage over binary codewords under BPSK modulation.

\begin{figure}[h]
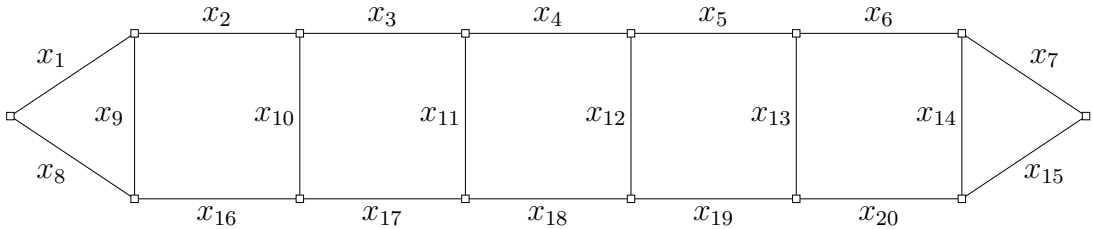

\centering
$\begin{graph}(17,4)
\unitlength=.55cm

\squarenode{f1}(2,3)[\graphnodecolour(1,1,1)]
\squarenode{f2}(5,5)[\graphnodecolour(1,1,1)]
\squarenode{f3}(9,5)[\graphnodecolour(1,1,1)]
\squarenode{f4}(13,5)[\graphnodecolour(1,1,1)]
\squarenode{f5}(17,5)[\graphnodecolour(1,1,1)]
\squarenode{f6}(21,5)[\graphnodecolour(1,1,1)]
\squarenode{f7}(25,5)[\graphnodecolour(1,1,1)]
\squarenode{f8}(28,3)[\graphnodecolour(1,1,1)]
\squarenode{f9}(5,1)[\graphnodecolour(1,1,1)]
\squarenode{f10}(9,1)[\graphnodecolour(1,1,1)]
\squarenode{f11}(13,1)[\graphnodecolour(1,1,1)]
\squarenode{f12}(17,1)[\graphnodecolour(1,1,1)]
\squarenode{f13}(21,1)[\graphnodecolour(1,1,1)]
\squarenode{f14}(25,1)[\graphnodecolour(1,1,1)]

\edge{f1}{f2}
\edge{f2}{f3}
\edge{f3}{f4}
\edge{f4}{f5}
\edge{f5}{f6}
\edge{f6}{f7}
\edge{f7}{f8}
\edge{f1}{f9}
\edge{f2}{f9}
\edge{f3}{f10}
\edge{f4}{f11}
\edge{f5}{f12}
\edge{f6}{f13}
\edge{f7}{f14}
\edge{f8}{f14}
\edge{f9}{f10}
\edge{f10}{f11}
\edge{f11}{f12}
\edge{f12}{f13}
\edge{f13}{f14}

\freetext(3,4.4){$x_1$}
\freetext(7,5.4){$x_2$}
\freetext(11,5.4){$x_3$}
\freetext(15,5.4){$x_4$}
\freetext(19,5.4){$x_5$}
\freetext(23,5.4){$x_6$}
\freetext(27,4.4){$x_7$}
\freetext(3,1.6){$x_8$}
\freetext(4.4,3.0){$x_9$}
\freetext(8.4,3.0){$x_{10}$}
\freetext(12.4,3.0){$x_{11}$}
\freetext(16.4,3.0){$x_{12}$}
\freetext(20.4,3.0){$x_{13}$}
\freetext(24.4,3.0){$x_{14}$}
\freetext(27,1.6){$x_{15}$}
\freetext(7,0.6){$x_{16}$}
\freetext(11,0.6){$x_{17}$}
\freetext(15,0.6){$x_{18}$}
\freetext(19,0.6){$x_{19}$}
\freetext(23,0.6){$x_{20}$}
\end{graph}
$
\caption{The normal graph corresponding to the [20, 7] cycle code $\mc C_2$.}
\label{fig:c2normalGraph}
\end{figure}

Many pseudocodewords were examined, and we present results on the following three for illustration:
\begin{align*}
\vomega_A & = \begin{bmatrix} \frac{1}{2}  & 1 & 1 & 1 & 1 & 1 & \frac{1}{2}  & \frac{1}{2}  & \frac{1}{2}  & 0 & 0 & 0 & 0 & \frac{1}{2}  & \frac{1}{2}  & 0 & 0 & 0 & 0 & 0 \end{bmatrix}^T \\
\vomega_B & = \begin{bmatrix} \frac{1}{2}  & 1 & 1 & 1 & \frac{1}{2} & 0 & 0  & \frac{1}{2}  & \frac{1}{2}  & 0 & 0 & \frac{1}{2} & \frac{1}{2} & 0  & 0  & 0 & 0 & 0 & \frac{1}{2} & 0 \end{bmatrix}^T \\
\vomega_C & = \begin{bmatrix} 0 & \frac{1}{2} & \frac{1}{2} &  1 & \frac{1}{2} & \frac{1}{2} & 0 & 0 & \frac{1}{2} & 0 & \frac{1}{2} & \frac{1}{2} & 0 & \frac{1}{2} & 0 & \frac{1}{2} & \frac{1}{2} & 0 & \frac{1}{2} & \frac{1}{2} \end{bmatrix}^T.
\end{align*}
Modulated versions of these pseudocodewords as well as for the all-zeros codeword were found in a manner similar to that outlined for $\mc C_1$ in Section~\ref{sec:example}, with the inradius of each approximation cone computed to within $10^{-5}$.  To two decimal places, the inradii for $\vomega_A, \vomega_B, \vomega_C$, and the all-zeros codeword were 0.46, 0.39, 0.21, and 0.45, respectively.  The radius of the largest sphere centered at the unit vector in the direction of the vector used to transmit the all-zeros codewords under BPSK modulation and contained within the approximation cone developed for the all-zeros codeword was only 0.39.  These inradii suggest that $\vomega_A$ will achieve the best error rates, followed by   codewords broadcast with insphere modulation, a near-tie between $\vomega_B$ and the all-zeros codeword under BPSK modulation, and finally $\vomega_C$.  These predictions based on inradii of approximation cones are only partly born out in simulation - see Figure~\ref{fig:box1by7sigma}.  Still, it is interesting to observe that insphere modulation of the all-zeros codeword as well as of $\vomega_A$ beat out BPSK modulation at moderate to low levels of Gaussian noise.

\begin{figure}[h]
\centering
\includegraphics[scale = .7]{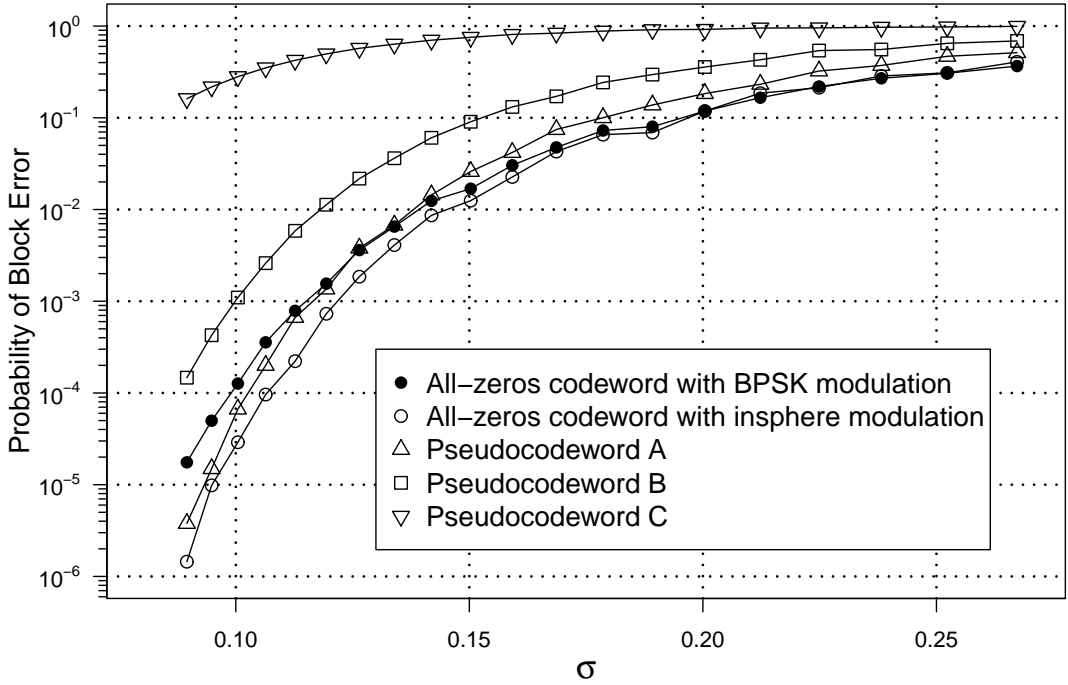}
\caption{A comparison of block error rates when transmitting the all-zeros codeword, $\vomega_A, \vomega_B$, and $\vomega_C$ with insphere modulation.  Also included is the block error rate of the all-zeros codeword under traditional BPSK modulation. The horizontal axis measures the standard deviation of the additive white Gaussian noise.}
\label{fig:box1by7sigma}
\end{figure}

We present various coding schemes based off of $\mc C_2$ and its pseudocodewords in Figure~\ref{fig:box1by7simulation}.  The spectral efficiencies of the both schemes utilizing only codewords is 0.35.  The spectral efficiency of the scheme whose codebook consists of both codewords as well as pseudocodewords in the orbit of $\vomega_A$ is $\log_2(128 + 32)/20 = 0.3661$,  while that of the scheme whose codebook consists of codewords as well as pseudocodewords in the orbits of $\vomega_A$ and $\vomega_B$ is $\log_2(128 + 32 + 32)/20 = 0.3792$.  All of the coding schemes utilizing insphere modulation have spectral efficiencies that meet or exceed that of the traditional BPSK modulation, and they do so while performing better with respect to block error rate (see Figure~\ref{fig:box1by7simulation}).  It is also interesting that the coding scheme consisting of codewords as well as pseudocodewords in the orbit of $\vomega_A$ performs \emph{slightly} better than codewords under insphere modulation while also enjoying a higher spectral efficiency.  This is encouraging, for it gives us hope that incorporating pseudocodewords into coding schema may be able to produce results that extend beyond those obtained by simply optimizing the manner in which we transmit binary codewords.  

\begin{figure}[h]
\centering
\includegraphics[scale = .7]{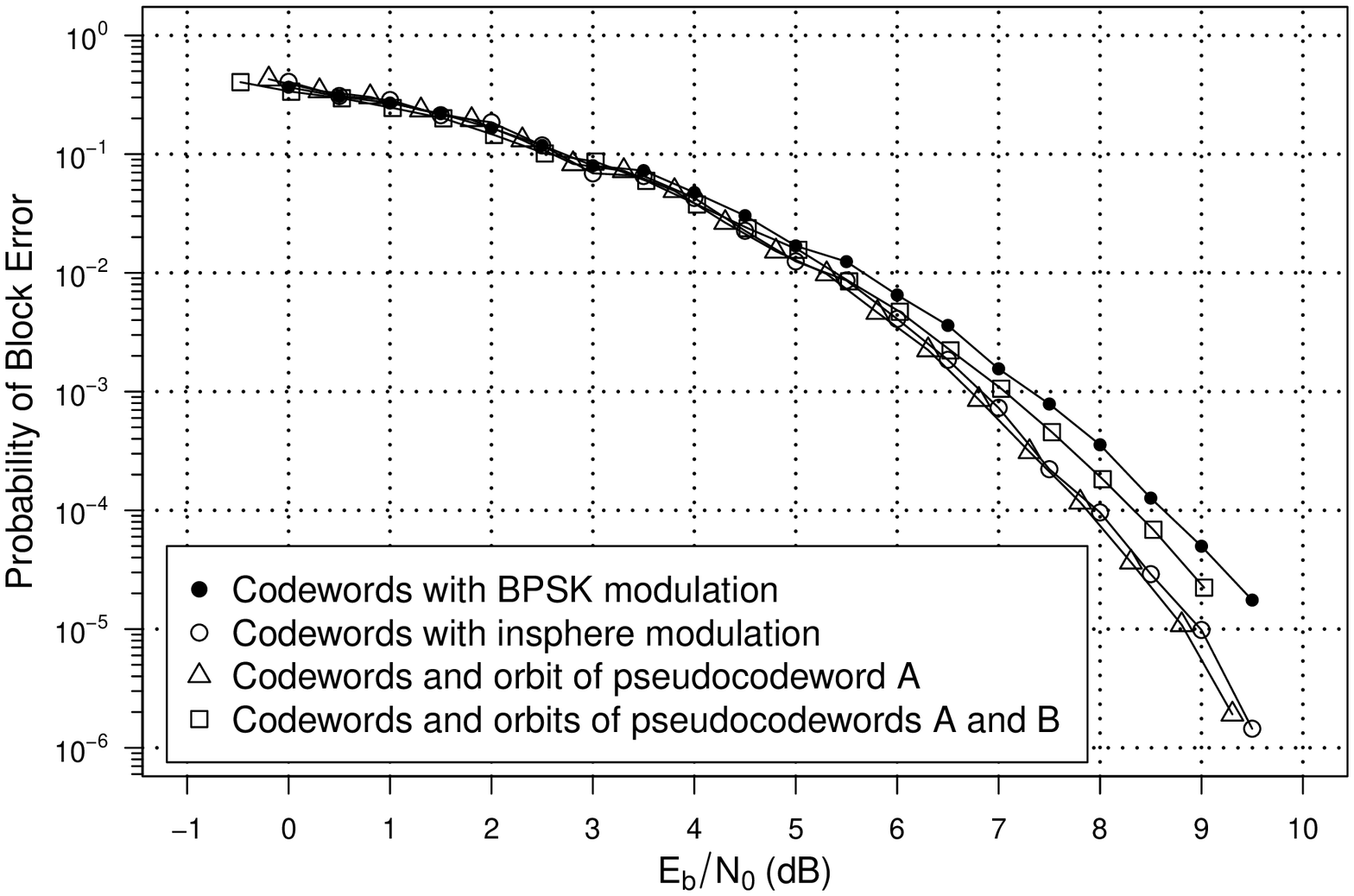}
\caption{A comparison of block error rates for four codebook/modulation pairs derived from the cycle code $\mc C_2$.}
\label{fig:box1by7simulation}
\end{figure}

\subsection{A Randomly Generated [30, 10] Code}\label{subsec:randomn30}

In order to branch out from cycle codes we randomly generated a $[30, 10]$ code $\mc C_3$ with the following parity-check matrix:
\[
\setcounter{MaxMatrixCols}{30}
H_3 = \footnotesize{\begin{bmatrix}
0&0&0&0&0&0&0&0&0&0&1&0&0&0&1&0&0&0&0&0&0&0&0&0&0&0&0&0&0&0 \\
0&0&0&0&0&0&0&0&0&0&0&0&0&1&0&0&0&0&0&0&1&0&0&0&0&0&0&0&0&0 \\
0&0&0&0&1&0&0&0&0&0&0&1&0&0&0&0&0&0&0&0&0&0&0&0&0&0&0&0&0&0 \\
1&0&0&0&0&0&0&0&0&1&0&0&0&0&0&0&0&0&0&0&0&0&0&0&0&0&0&0&0&0 \\
0&0&0&0&0&0&0&0&0&0&0&1&0&0&0&0&0&0&0&0&0&0&0&1&1&0&0&0&0&0 \\
0&0&1&0&0&0&1&0&0&0&0&0&0&1&0&0&0&0&0&0&0&0&0&0&0&0&0&0&0&0 \\
0&0&0&0&0&0&0&0&0&0&0&0&1&0&0&0&0&0&0&0&0&0&0&0&0&0&1&0&1&0 \\
0&0&0&0&0&0&0&0&0&0&0&1&0&0&0&0&1&0&0&0&0&0&0&0&0&1&0&0&0&0 \\
0&0&0&0&0&0&0&0&0&0&0&0&0&0&0&1&0&0&0&0&0&0&1&0&0&1&0&0&0&0 \\
0&0&0&0&1&0&0&0&0&0&1&0&1&0&0&0&0&0&0&0&0&0&0&0&0&0&0&0&0&0 \\
0&0&0&0&0&0&0&0&1&0&0&0&0&0&0&1&0&0&0&0&0&1&0&0&0&0&0&0&0&0 \\
0&0&0&0&0&1&0&0&0&0&0&0&0&0&0&1&0&0&0&0&0&1&0&0&0&0&0&0&0&0 \\
0&0&1&0&0&0&0&0&0&0&1&0&0&0&0&0&0&0&0&0&0&1&0&0&0&0&0&0&0&0 \\
0&1&0&0&0&0&0&0&1&0&0&0&0&0&0&0&1&0&0&0&0&0&0&0&0&0&0&0&0&0 \\
0&1&0&0&0&0&0&0&0&0&0&0&1&0&0&0&0&0&0&0&0&0&0&0&0&0&0&1&0&0 \\
0&1&0&0&0&0&0&0&0&0&0&0&0&1&0&0&0&1&1&0&0&0&0&0&0&0&0&0&0&0 \\
0&0&0&0&0&0&0&1&0&0&0&0&0&0&1&0&0&0&0&0&0&0&0&0&0&1&0&1&0&0 \\
0&0&0&1&0&0&0&1&0&0&0&0&0&0&0&0&0&0&1&0&0&0&0&0&1&0&0&0&0&0 \\
0&0&0&0&0&0&0&0&0&0&0&0&1&0&0&0&0&1&0&1&0&0&0&0&0&0&0&0&1&0 \\
0&0&0&0&0&0&1&1&0&0&0&0&0&0&0&0&0&1&0&0&0&0&0&0&1&0&0&0&0&1
\end{bmatrix}}.
\]
Pseudocodewords for this code/parity-check matrix combination were generated via a random search.  The two we highlight are given below:
\begin{align*}
\vomega_D & = \footnotesize{\begin{bmatrix}  0	& 1	& 0	& 0	& \frac{1}{3}	& 0	& 1	& \frac{2}{3}		& 0	& 0	& \frac{1}{3}		& \frac{1}{3}		& 0	& 1	& \frac{1}{3}		& \frac{1}{3}		& 1	& 0	& 0	& 1	& 1	& \frac{1}{3}		& 1	& 1	& \frac{2}{3}	& \frac{2}{3}		& 1	& 1	& 1	& 1\end{bmatrix} }^T\\
\vomega_E & = \footnotesize{\begin{bmatrix} 1	& \frac{4}{5}	& 0	& 1	& 0	& 1	& 0	& \frac{3}{5}	& 1	& 1	& 1	& 0	& 1	& 0	& 1	& 0	& \frac{1}{5}	& \frac{3}{5}	& \frac{3}{5}	& \frac{3}{5}	& 0	& 1	&  \frac{1}{5}		& 1	& 1	&  \frac{1}{5}		& 0	&  \frac{1}{5}		& 1	& 1\end{bmatrix}}^T.
\end{align*}
Since neither $\vomega_D$ nor $\vomega_E$ contains any fractional entries equal to $\frac{1}{2}$, each has a $\mc C$-symmetry orbit of the same size as the binary code itself.  The codebooks that included these nontrivial pseudocodewords, however, all had much higher error rates than the binary codewords. Figure~\ref{fig:randomn30sigma} shows the performance of $\vomega_D$ and $\vomega_E$ as compared to the performance of the all-zeros codeword under both BPSK and insphere modulation.  The inradii of the approximation cones used to produce transmittable versions of $\vomega_D$ and $\vomega_E$ were 0.24 and 0.20, respectively.  The corresponding inradii for the all-zeros codeword under insphere modulation was, by comparison, 0.43.  The gap between these numbers is likely the reason for the marked difference in block error rates observed in Figure~\ref{fig:randomn30sigma}.  Since none of the nontrivial pseudocodewords we found came close to matching the performance of codewords, we were unable to increase the spectral efficiency of this coding scheme without greatly sacrificing performance - see Figure~\ref{fig:randomn30sim}.

\begin{figure}[b!]
\centering
\includegraphics[scale = .7]{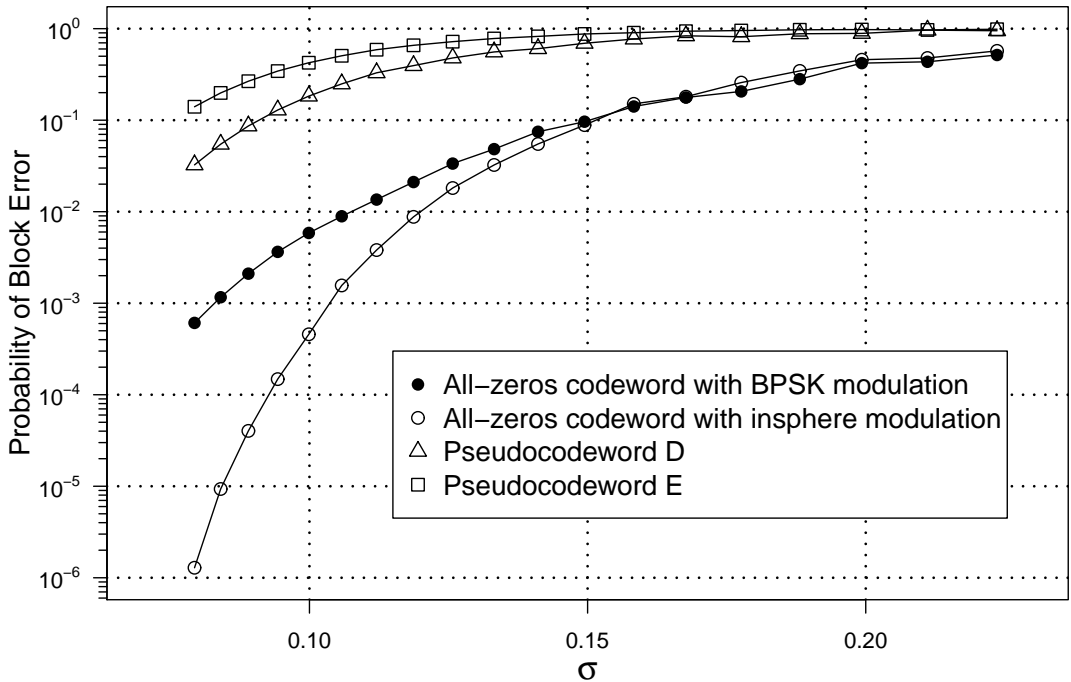}
\caption{A comparison of block error rates for the randomly generated [30, 10] code when transmitting the all-zeros codeword and two nontrivial pseudocodewords $\vomega_D$ and $\vomega_E$ with insphere modulation.  Also included is the block error rate of the all-zeros codeword under traditional BPSK modulation. The horizontal axis measures the standard deviation of the additive white Gaussian noise. }
\label{fig:randomn30sigma}
\end{figure}

\begin{figure}[h]
\centering
\includegraphics[scale = .7]{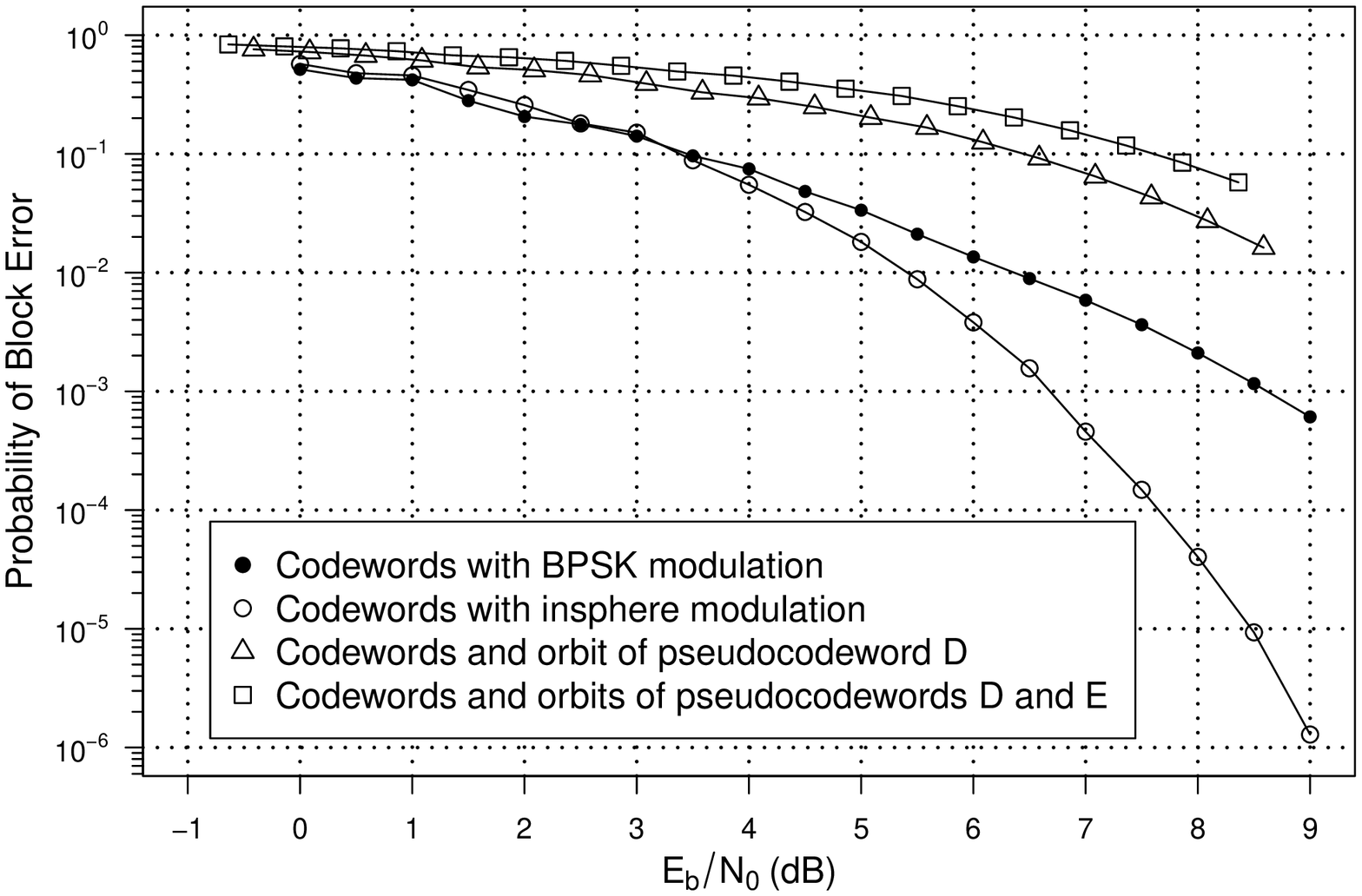}
\caption{A comparison of block error rates for four codebook/modulation pairs derived from the code $\mc C_3$.  Insphere modulation is used for all elements of a codebook unless explicitly noted otherwise.}
\label{fig:randomn30sim}
\end{figure}

As mentioned before, the inradius of the approximation cone used to produce a transmittable version of the all-zeros codeword was 0.43.  The largest sphere contained within this same cone but centered at the unit vector in the direction of the BPSK vector was only 0.26.  This suggests that insphere modulation of codewords should display lower block error rates than those of codewords under BPSK modulation.  As we see in Figure~\ref{fig:randomn30sigma}, this is indeed the case.  A potential explanation of this improvement in performance is that insphere modulation gives a method of identifying individual bit locations that are particularly susceptible to errors and ``boosting'' them, while in turn lowering the energy used to transmit other bits to balance things out.  For instance, we note that $\mc C_3$ has a codeword of Hamming weight 2 whose support lies only in the first and tenth positions. The unit vector used to transmit the all-zeros vector under insphere modulation assigns values of 0.3048 and 0.3065 to the first and tenth positions, respectively.  These values are significantly larger than $1/\sqrt{30} = 0.1826$, which is the value that the unit vector in the direction of the BPSK vector assigns uniformly to all positions.

%
%

\subsection{A Randomly Generated [30, 17] Code}

In the past three examples, insphere modulation applied to codewards has always outperformed binary phase shift keying at higher signal-to-noise ratios.  In this section, we give a cautionary example to demonstrate that this is not always the case.

Consider the $[30, 17]$ code $\mc C_4$ presented by the following parity-check matrix:
\[
\setcounter{MaxMatrixCols}{30}
H_4 = \footnotesize{\begin{bmatrix}
1&	0&	1&	0&	0&	0&	0&	1&	0&	0&	0&	1&	0&	0&	0&	0&	0&	0&	0&	0&	0&	0&	1&	0&	0&	0&	0&	0&	0&	1\\
0&	0&	0&	0&	0&	0&	0&	0&	0&	1&	0&	0&	1&	0&	0&	1&	0&	1&	0&	0&	1&	0&	0&	0&	1&	0&	0&	0&	0&	0\\
0&	0&	0&	0&	0&	1&	1&	0&	0&	0&	1&	0&	0&	0&	0&	0&	0&	0&	0&	0&	0&	1&	0&	0&	0&	0&	0&	1&	1&	0\\
0&	0&	0&	1&	0&	0&	0&	0&	1&	0&	0&	0&	0&	1&	1&	0&	0&	0&	0&	1&	0&	0&	0&	1&	0&	0&	0&	0&	0&	0\\
0&	1&	0&	0&	1&	0&	0&	0&	0&	0&	0&	0&	0&	0&	0&	0&	1&	0&	1&	0&	0&	0&	0&	0&	0&	1&	1&	0&	0&	0\\
1&	0&	0&	0&	0&	0&	0&	0&	0&	0&	0&	0&	1&	0&	0&	1&	0&	0&	0&	0&	0&	0&	0&	1&	0&	0&	0&	0&	1&	1\\
0&	0&	0&	1&	0&	0&	0&	0&	0&	0&	0&	0&	0&	1&	0&	0&	1&	0&	0&	0&	1&	1&	0&	0&	0&	0&	0&	1&	0&	0\\
0&	1&	0&	0&	0&	0&	0&	0&	1&	0&	1&	1&	0&	0&	0&	0&	0&	0&	1&	0&	0&	0&	0&	0&	0&	1&	0&	0&	0&	0\\
0&	0&	1&	0&	1&	1&	0&	1&	0&	0&	0&	0&	0&	0&	1&	0&	0&	0&	0&	0&	0&	0&	0&	0&	0&	0&	1&	0&	0&	0\\
0&	0&	0&	0&	0&	0&	1&	0&	1&	0&	0&	0&	0&	0&	0&	1&	0&	0&	0&	0&	0&	0&	1&	0&	1&	1&	0&	0&	0&	0\\
0&	0&	0&	0&	1&	0&	0&	0&	0&	1&	0&	0&	1&	1&	0&	0&	0&	0&	0&	0&	0&	1&	0&	0&	0&	0&	0&	0&	1&	0\\
1&	0&	1&	0&	0&	0&	0&	0&	0&	0&	0&	0&	0&	0&	0&	0&	1&	0&	1&	1&	1&	0&	0&	0&	0&	0&	0&	0&	0&	0\\
0&	0&	0&	0&	0&	0&	0&	1&	0&	0&	0&	0&	0&	0&	1&	0&	0&	1&	0&	0&	0&	0&	0&	1&	0&	0&	1&	0&	0&	1
\end{bmatrix}}.
\]
All pseudocodewords tested for this code performed abysmally - the best was still giving block error rates above 0.1 at 8.5 decibels.  Moreover, as we can see in Figure~\ref{fig:randomn30k17sim} the performance of insphere modulation and BPSK modulation were virtually identical.  Thus, no real gains in either spectral efficiency or block error rate can be gained by using insphere modulation in conjunction with this code.

\begin{figure}[h]
\centering
\includegraphics[scale = .7]{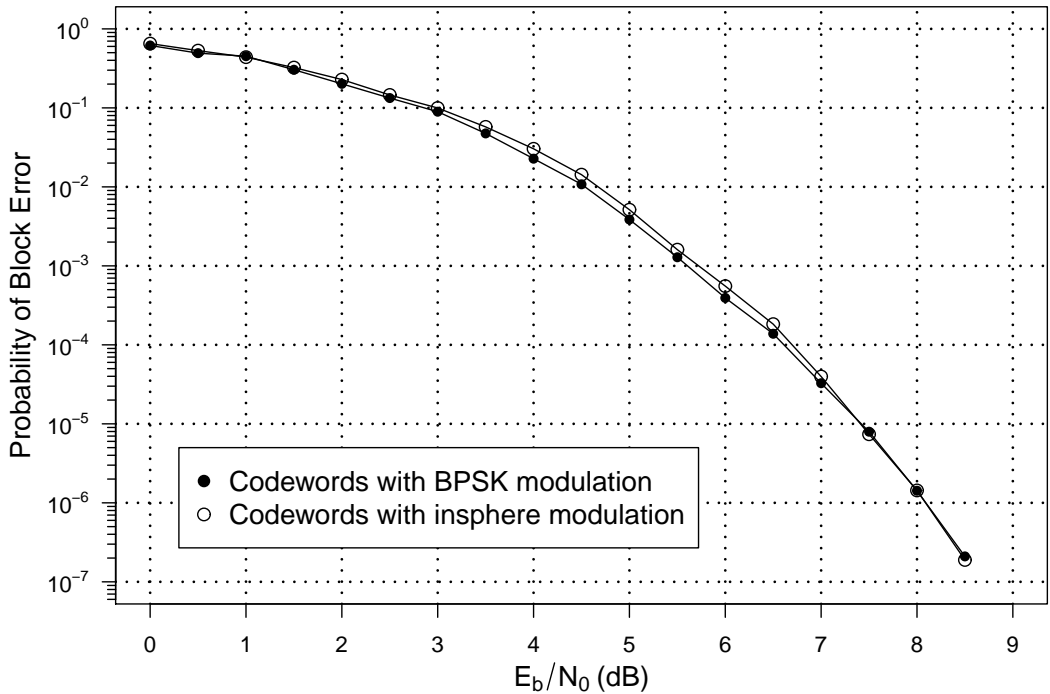}
\caption{A comparison of block error rates for the randomly generated $[30, 17]$ code under both BPSK and insphere modulation.}
\label{fig:randomn30k17sim}
\end{figure}

\section{Conclusion}\label{sec:conclusion}

In this paper we have defined insphere modulation, a modulation scheme that is suitable for transmitting nontrivial linear programming pseudocodewords in addition to binary codewords.  In doing so, we have developed a method for approximating the insphere of a polyhedral cone to any degree of accuracy.  Simulation results confirm that certain codebooks utilizing a combination of nontrivial LP pseudocodewords in conjunction with codewords give modest gains in spectral efficiency without sacrificing error correcting performance. While originally introduced as a means to transmit nontrivial pseudocodewords, insphere modulation can in some instances be used to lower the block error rates of binary codewords relative to their performance under traditional BPSK modulation.

It is the author's belief that the characterization of linear programming pseudocodewords for cycle codes of~\cite{axvigDreher} makes it easier for the researcher to search through the vast number of pseudocodewords and possibly stumble upon some that can be beneficial.  A broader characterization of LP pseudocodewords for general binary codes, should it exist, may prove  useful in determining the overall efficacy of using hybrid codebooks comprised of both binary codewords and nontrivial pseudocodewords.

\bibliographystyle{plain}
\bibliography{refs_lp_pcw_incorporation}

\end{document}